\numberwithin{equation}{section}
\newcommand{\dd}{\mathrm{d}}
\newcommand{\Res}{\mathop{\,\rm Res\,}}
\newtheorem{Theorem}{Theorem}[section]
\newtheorem*{Theorem*}{Theorem}
\newtheorem{Corollary}[Theorem]{Corollary}
\newtheorem{Lemma}[Theorem]{Lemma}
\theoremstyle{definition}
\newtheorem{Remark}[Theorem]{Remark} }
\begin{document}


\newcommand{\arXivNumber}{2306.01501}

\renewcommand{\PaperNumber}{050}

\FirstPageHeading

\ShortArticleName{A Note on BKP for the Kontsevich Matrix Model with Arbitrary Potential}

\ArticleName{A Note on BKP for the Kontsevich Matrix Model\\ with Arbitrary Potential}

\Author{Ga\"etan BOROT~$^{\rm a}$ and Raimar WULKENHAAR~$^{\rm b}$}

\AuthorNameForHeading{G.~Borot and R.~Wulkenhaar}

\Address{$^{\rm a)}$~Institut f\"ur Mathematik und Institut f\"ur Physik, Humboldt-Universit\"at zu Berlin,\\
\hphantom{$^{\rm a)}$}~Unter den Linden~6, 10099 Berlin, Germany}
\EmailD{\href{mailto:gaetan.borot@hu-berlin.de}{gaetan.borot@hu-berlin.de}}

\Address{$^{\rm b)}$~Mathematisches Institut, Universit\"at M\"unster, Einsteinstr.\ 62, 48149 M\"unster, Germany}
\EmailD{\href{mailto:raimar@math.uni-muenster.de}{raimar@math.uni-muenster.de}}

\ArticleDates{Received January 03, 2024, in final form June 01, 2024; Published online June 11, 2024}

\Abstract{We exhibit the Kontsevich matrix model with arbitrary potential as a~BKP tau-function with respect to polynomial deformations of the potential. The result can be equivalently formulated in terms of Cartan--Pl\"ucker relations of certain averages of Schur $Q$-function. The extension of a Pfaffian integration identity of de Bruijn to singular kernels is instrumental in the derivation of the result.}

\Keywords{BKP hierarchy; matrix models; classical integrability}

\Classification{37K10; 37K20; 15A15}

\section{The formula}\label{Chap1}

\subsection{The Kontsevich matrix model with arbitrary potential}

Let $\mathcal{H}_N$ be the space of Hermitian $N \times N$ matrices
equipped with the Lebesgue measure
\[
\dd H= \prod_{i=1}^N \dd H_{ii}
 \prod_{1 \leq i < j \leq N} \dd\operatorname{Re}(H_{ij})\, \dd\operatorname{Im}(H_{ij}).
\]
Given a positive matrix $\Lambda = \operatorname{diag}(\lambda_1,\dots,\lambda_N)$,
we introduce the Gaussian probability measure on $\mathcal{H}_N$
\begin{equation}
\label{theGauss}
\dd \mathbb{P}_N(H) = \frac{\sqrt{\Delta(\boldsymbol{\lambda},\boldsymbol{\lambda})}}{2^{\frac{N}{2}} (2\pi)^{\frac{N^2}{2}}}\,\dd H {\rm e}^{-\frac{1}{2}{\rm Tr}(\Lambda H^2)}.
\end{equation}
We use the notations $\Delta(\boldsymbol{\lambda}) = \prod_{1 \leq i < j \leq N} (\lambda_j - \lambda_i)$ and $\Delta(\boldsymbol{\lambda},\boldsymbol{\mu}) = \prod_{1 \leq i,j \leq N} (\lambda_i + \mu_j)$. We denote $[n] = \{1,\dots,n\}$ and $\lambda_{\min} = \min\{\lambda_i\mid i \in [N]\}$.

Let $V_0$ be a continuous function on $\mathbb{R}$ such that the measure ${\rm e}^{- \frac{1}{2}\lambda_{\min} x^2 + V_0(x)}\,\dd x$ has finite moments on $\mathbb{R}$ (take for instance $V_0$ to be a polynomial of even degree with negative top coefficient). Then the measure $\dd \mathbb{P}_N(H) {\rm e}^{ \operatorname{Tr}V_0(H)}$ on $\mathcal{H}_N$ is finite. Let
\[
V_{\mathbf{t}}(x) = V_0(x) + \sum_{k \geq 0} t_{2k + 1} x^{2k + 1},
\]
where $\mathbf{t} = (t_{2k + 1})_{k \geq 1}$ are formal parameters. The partition function of the Kontsevich model with arbitrary potential is defined by
\begin{equation}
\label{Zdef} Z_N(\mathbf{t}) = \int_{\mathcal{H}_N} \dd\mathbb{P}_N(H) {\rm e}^{\operatorname{Tr}V_{\mathbf{t}}(H)}.
\end{equation}
This note aims at showing that $Z_N(\mathbf{t})$ is a
tau-function of the BKP hierarchy \cite{Date:1981qz}.

\subsection{Pfaffian formula}

First, we establish a Pfaffian formula for $Z_N(\mathbf{t})$. The proof proposed in Section~\ref{S2} consists in classical algebraic manipulations with matrix integrals and an analysis argument -- that one may find of independent interest, see Lemma~\ref{HUHH} and Remark~\ref{therem} -- to justify the extension of de~Bruijn's Pfaffian formula \cite{deBruijn1955} to singular kernels like the one appearing in \eqref{BZ}.
\begin{Theorem}
\label{th:1} For even $N$, we have
\begin{equation}
\label{BZ} Z_N(\mathbf{t}) = \frac{\sqrt{\Delta(\boldsymbol{\lambda},\boldsymbol{\lambda})}}{2^{\frac{N^2}{2}} (2\pi)^{\frac{N}{2}}\prod_{n = 1}^{N - 1} n!} \operatorname{Pf}_{0 \leq m,n \leq N - 1}(K_{N;m,n}(\mathbf{t})),
\end{equation}
where $\fint = \lim_{\epsilon \rightarrow 0} \int_{|x + y| \geq \epsilon}$ is the Cauchy principal value integral and
\begin{gather*}
 K_{N;m,n}(\mathbf{t}) = \fint_{\mathbb{R}^2} \frac{x - y}{x + y} F_{N;m}(x)F_{N;n}(y) {\rm e}^{V_{\mathbf{t}}(x) + V_{\mathbf{t}}(y)}\,\dd x \dd y, \\
F_{N;n}(x) = x^{2n} + \frac{(-2)^n n!}{\Delta(\boldsymbol{\lambda})}\det\bigl(\lambda_i^0 \big| \lambda_i^1 \big| \cdots \big| \lambda_i^{n - 1} \big| R_N\bigl(-\tfrac{1}{2}\lambda_ix^2\bigr) \big| \lambda_i^{n + 1} \big| \cdots \big| \lambda_i^{N - 1}\bigr), \\
R_N(\xi) = \frac{\xi^N}{(N - 1)!} \int_{0}^{1} \dd u\, (1 - u)^{N - 1} {\rm e}^{\xi u} = {\rm e}^{\xi}\biggl(1 - \frac{\Gamma(N;\xi)}{(N - 1)!}\biggr).
\end{gather*}
\end{Theorem}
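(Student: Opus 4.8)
The plan is to reduce the matrix integral \eqref{Zdef} to an integral over the eigenvalues of $H$ and to recognize the latter as a Pfaffian via de~Bruijn's identity \cite{deBruijn1955}. First I would diagonalize $H = UXU^{\dagger}$ with $X=\operatorname{diag}(x_1,\dots,x_N)$, so that $\operatorname{Tr}V_{\mathbf{t}}(H)=\sum_i V_{\mathbf{t}}(x_i)$ while the only $U$-dependence sits in $\operatorname{Tr}(\Lambda H^2)=\operatorname{Tr}(\Lambda UX^2U^{\dagger})$. Integrating out $U$ with the Harish-Chandra--Itzykson--Zuber formula for the matrices $-\tfrac12\Lambda$ and $X^2$ produces $\bigl(\prod_{k=1}^{N-1}k!\bigr)\det\bigl({\rm e}^{-\frac12\lambda_i x_j^2}\bigr)\big/\bigl(\Delta(-\tfrac12\boldsymbol{\lambda})\,\Delta(\boldsymbol{x}^2)\bigr)$. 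The eigenvalue density then carries the combination $\Delta(\boldsymbol{x})^2/\Delta(\boldsymbol{x}^2)=\Delta(\boldsymbol{x})/\prod_{i<j}(x_i+x_j)=\prod_{i<j}\frac{x_j-x_i}{x_j+x_i}$, which by Schur's Pfaffian identity (using that $N$ is even) equals $\operatorname{Pf}_{1\le i,j\le N}\bigl(\frac{x_i-x_j}{x_i+x_j}\bigr)$ up to a sign absorbed in the bookkeeping.

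At this stage $Z_N(\mathbf{t})$ equals, up to explicit constants, $\int_{\mathbb{R}^N}\operatorname{Pf}\bigl(\frac{x_i-x_j}{x_i+x_j}\bigr)\det\bigl(\tilde\phi_i(x_j)\bigr)\,\dd\boldsymbol{x}$ with $\tilde\phi_i(x)={\rm e}^{-\frac12\lambda_i x^2+V_{\mathbf{t}}(x)}$. I would recombine the rows of the determinant by $A=\operatorname{diag}\bigl((-2)^n n!\bigr)V^{-1}$, $V_{ki}=\lambda_i^k$: this is the unique choice making the part of $\sum_i A_{ni}{\rm e}^{-\frac12\lambda_i x^2}$ that is polynomial of degree $\le N-1$ in $\lambda_i$ equal to $x^{2n}$, and by Cramer's rule the remaining transcendental part is exactly the term carried by the Taylor remainder $R_N(-\tfrac12\lambda_i x^2)={\rm e}^{-\frac12\lambda_i x^2}-\sum_{k=0}^{N-1}\frac{(-\frac12\lambda_i x^2)^k}{k!}$, so that $\sum_i A_{ni}\tilde\phi_i={\rm e}^{V_{\mathbf{t}}}F_{N;n}$. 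This is precisely where $R_N$ and the functions $F_{N;n}$ originate. De~Bruijn's identity then turns the integral into $N!\operatorname{Pf}(K_{N;m,n})$, and a short bookkeeping reproduces the prefactor of \eqref{BZ}: the factor $\Delta(\boldsymbol{\lambda})$ and the powers of $2$ coming from $\Delta(-\tfrac12\boldsymbol{\lambda})$ cancel against those from $\det A$, the factorials from the HCIZ formula cancel those from $1/\det A$, and the Jacobian of the diagonalization together with the $N!$ from de~Bruijn supplies the remaining $1/\prod_{n=1}^{N-1}n!$.

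The main obstacle is that de~Bruijn's formula presupposes an \emph{integrable} antisymmetric kernel, whereas $\frac{x-y}{x+y}$ has a pole on the antidiagonal $x+y=0$. The product $\operatorname{Pf}\bigl(\frac{x_i-x_j}{x_i+x_j}\bigr)\det\bigl(\tilde\phi_i(x_j)\bigr)$ is nonetheless regular there, since ${\rm e}^{-\frac12\lambda_i x^2}$ is even and hence $\det\bigl({\rm e}^{-\frac12\lambda_i x_j^2}\bigr)$ is divisible by $\Delta(\boldsymbol{x}^2)$, killing the poles; but the individual entries of the Pfaffian kernel diverge and must be read as the principal values $K_{N;m,n}$. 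I would handle this through the cutoff $\int_{|x+y|\ge\epsilon}$: ordinary de~Bruijn applies for each $\epsilon>0$, the left-hand side converges to the regular integral by dominated convergence, and the crux is to prove that the kernel entries converge to finite principal values and that this limit commutes with the Pfaffian. Establishing the existence of these singular integrals and the legitimacy of the limit is exactly the analytic content isolated in Lemma~\ref{HUHH} and Remark~\ref{therem}; convergence at infinity is guaranteed by the moment hypothesis on ${\rm e}^{-\frac12\lambda_{\min}x^2+V_0(x)}$, which controls each $\tilde\phi_i$.
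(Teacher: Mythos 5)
Your outline follows the paper's proof step for step: diagonalisation of $H$, integrating out $U$ with Harish-Chandra--Itzykson--Zuber, Schur's Pfaffian identity for $\Delta(\boldsymbol{x})^2/\Delta\bigl(\boldsymbol{x}^2\bigr)$ (where evenness of $N$ enters), a cutoff version of de~Bruijn, and absorption of $\Delta(\boldsymbol{\lambda})$ via the inverse Vandermonde combined with the Taylor remainder $R_N$. The one organisational difference is harmless: you recombine the columns ${\rm e}^{-\frac{1}{2}\lambda_i x^2}\mapsto F_{N;n}$ \emph{before} applying de~Bruijn, whereas the paper first obtains $\operatorname{Pf}(L_{m,n})/\Delta(\boldsymbol{\lambda})$ with pure Gaussian kernels and only then conjugates the finite antisymmetric matrix $L$ by the inverse Vandermonde; by linearity of $\fint$ in each kernel slot the two computations are identical, and your normalisation $A=\operatorname{diag}\bigl((-2)^n n!\bigr)V^{-1}$ matches the paper's choice $v_m=(-1)^m/\bigl(2^m m!\bigr)$, as does your identification of the transcendental part with $R_N$.

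There is, however, a genuine gap at the analytic step, which is precisely where the paper does all of its work. You invoke dominated convergence for the $\epsilon\to 0$ limit of the $N$-fold integral but never exhibit a dominating function, and none is naively available: the entrywise bound $|x_i-x_j|/|x_i+x_j|$ is not integrable against $\prod_i\rho(x_i)$, and the pointwise cancellation you correctly note (divisibility of the determinant by $\Delta\bigl(\boldsymbol{x}^2\bigr)$) only gives $I_\epsilon\to I_0$ almost everywhere; it does not by itself control the dangerous regime where $|x_i+x_j|$ is small \emph{while} $|x_i|,|x_j|\to\infty$, which the moment hypothesis on $\rho$ alone does not reach. The paper's Lemma~\ref{HUHH} closes this by a quantitative argument: it rewrites $\det\bigl(f_m\bigl(x_j^2\bigr)\bigr)/\Delta\bigl(\boldsymbol{x}^2\bigr)$ as a determinant of divided differences and bounds each entry through the Hermite--Genocchi simplex representation $g[\xi_1,\dots,\xi_N]=\int_{\Delta_{N-1}} g^{(N-1)}(p_1\xi_1+\cdots+p_N\xi_N)\,\dd\sigma(\boldsymbol{p})$, so that, after $\Delta\bigl(\boldsymbol{x}^2\bigr)$ cancels the denominators of the cutoff kernel, $|I_\epsilon|$ is majorised uniformly in $\epsilon$ by finitely many terms $\prod_{\{i,j\}\in\mathcal{P}}\bigl|\tilde{S}(x_i,x_j)\bigr|x_i^{q_i}x_j^{q_j}\rho(x_i)\rho(x_j)$, each integrable by the moment assumption. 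Since your attempt was meant to prove Theorem~\ref{th:1} (whose proof in the paper consists essentially of establishing this lemma), deferring to ``dominated convergence'' here begs the question. A minor mislocation as well: you call the commutation of the limit with the Pfaffian part of the crux, but that is immediate since the Pfaffian is a polynomial in finitely many entries, and the existence of the principal-value entries is comparatively routine; the sole crux is interchanging $\lim_{\epsilon\to 0}$ with the integral over $\mathbb{R}^N$.
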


Note that $R_N(\xi)$ is simply the $N$-th remainder of the Taylor series of ${\rm e}^{\xi}$. The expression given for $F_{N;n}(x)$ emphasises that it is $x^{2n} + O\bigl(x^{2N + 2}\bigr)$ as $x \rightarrow 0$, but we also have the equivalent expression
\[
F_{N;n}(x) = \frac{(-2)^n n!}{\Delta(\boldsymbol{\lambda})} \det\bigl(\lambda_i^0 \big| \lambda_i^1 \big| \cdots \big| \lambda_i^{n - 1} \big| {\rm e}^{-\frac{1}{2}\lambda_i x^2} \big| \lambda_i^{n + 1} \big| \cdots \big| \lambda_i^{N - 1}\bigr).
\]
For instance, the formula for $N = 2$ involves the two functions
\[
F_{2;0}(x) = \frac{{\rm e}^{-\frac{1}{2}\lambda_1 x^2} \lambda_2 - {\rm e}^{-\frac{1}{2}\lambda_2 x^2}\lambda_1}{\lambda_2 - \lambda_1}, \qquad F_{2;1}(x) = 2 \frac{{\rm e}^{-\frac{1}{2}\lambda_1 x^2} - {\rm e}^{-\frac{1}{2}\lambda_2 x^2}}{\lambda_2 - \lambda_1} .
\]

\subsection{BKP hierarchy}

If we ignored regularisation of the integrals, we would recognise in \eqref{BZ} the expression of a~BKP tau-function according to
\cite{Nimmo_1990}, see also \cite[Section~7.1.2.3]{harnad_balogh_2021}, where the functions $y_k$ of equation~(7.1.49) should be taken to
\[
y_k(\mathbf{t}) = \int_{\mathbb{R}} F_{N;k}(x) {\rm e}^{V_{\mathbf{t}}(x)}\,\dd x
\]
and depend on $\Lambda$. Yet, the presence of a regularisation
requires some care and the existing results in loc.\ cit.\ cannot be
applied as such (cf.\ Remark~\ref{RemarkBKP}). We propose in
Section~\ref{SproofBKP} an adaptation of the usual proof that works in
presence of Cauchy principal integrals. The obtained
Lemma~\ref{lemfinal} covers our case.

\begin{Corollary}
 \label{Comain} For fixed $\Lambda$, the formal power series
 $Z_N(\mathbf{t})$ is a BKP tau-function with respect to the times
 $\mathbf{t}$, i.e., it satisfies the Hirota bilinear equation of
 type B
\begin{equation}
\label{biline}
Z_N(\mathbf{t})Z_N\bigl(\tilde{\mathbf{t}}\bigr) = \mathop{{\rm Res}}_{z = 0} \frac{\dd z}{z} {\rm e}^{\sum_{k \geq 0} z^{2k + 1}(t_{2k + 1} - \tilde{t}_{2k + 1})} Z_N\bigl(\mathbf{t} - 2\bigl[z^{-1}\bigr]\bigr) Z_N\bigl(\tilde{\mathbf{t}} + 2\bigl[z^{-1}\bigr]\bigr),
\end{equation}
where $\bigl[z^{-1}\bigr] = \bigl(\frac{1}{z},\frac{1}{3z^3},\frac{1}{5z^5},\dots\bigr)$.
\end{Corollary}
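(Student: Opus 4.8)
The plan is to reduce \eqref{biline} to a statement about the Pfaffian alone and then adapt the standard derivation of the BKP bilinear identity for Pfaffian tau-functions, the only genuine subtlety being the Cauchy principal value. By \eqref{BZ}, $Z_N(\mathbf{t}) = C\,\operatorname{Pf}_{0\le m,n\le N-1}(K_{N;m,n}(\mathbf{t}))$ with $C$ independent of $\mathbf{t}$; since both sides of \eqref{biline} are products of two copies of $Z_N$, the factor $C^2$ cancels and it suffices to prove \eqref{biline} for $\tau(\mathbf{t}) := \operatorname{Pf}_{0\le m,n\le N-1}(K_{N;m,n}(\mathbf{t}))$. Writing $\psi_n(x,\mathbf{t}) = F_{N;n}(x)\,{\rm e}^{V_{\mathbf{t}}(x)}$, so that $K_{N;m,n}(\mathbf{t}) = \fint_{\mathbb{R}^2}\frac{x-y}{x+y}\,\psi_m(x,\mathbf{t})\,\psi_n(y,\mathbf{t})\,\dd x\,\dd y$, the whole $\mathbf{t}$-dependence sits in the scalar factor ${\rm e}^{\sum_{k\ge0}t_{2k+1}x^{2k+1}}$. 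From the expansion $\sum_{k\ge0}\frac{2}{(2k+1)z^{2k+1}}x^{2k+1} = 2\operatorname{arctanh}(x/z) = \log\frac{z+x}{z-x}$, the two shifts occurring in \eqref{biline} act on the wave functions by pointwise multiplication,
\[
\psi_n\bigl(x,\mathbf{t}-2[z^{-1}]\bigr) = \frac{z-x}{z+x}\,\psi_n(x,\mathbf{t}),\qquad \psi_n\bigl(y,\tilde{\mathbf{t}}+2[z^{-1}]\bigr) = \frac{z+y}{z-y}\,\psi_n(y,\tilde{\mathbf{t}}),
\]
so that $\tau(\mathbf{t}-2[z^{-1}])$ and $\tau(\tilde{\mathbf{t}}+2[z^{-1}])$ are Pfaffians of the kernels $K$ dressed by $\frac{z-x}{z+x}\frac{z-y}{z+y}$ and by its reciprocal. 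This is exactly the structure on which the derivations of Nimmo and of Harnad--Balogh operate.

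For a regular (non-singular) antisymmetric kernel in place of $\frac{x-y}{x+y}$ the analogous Pfaffian is a BKP tau-function by those references, and I would first reprove the bilinear identity in that setting in order to keep every step under control. Unfolding the two shifted Pfaffians by de Bruijn's identity -- the very manipulation behind Theorem~\ref{th:1} -- turns each into an $N$-fold integral carrying the Schur Pfaffian $\operatorname{Pf}_{i,j}\bigl(\frac{x_i-x_j}{x_i+x_j}\bigr) = \prod_{i<j}\frac{x_j-x_i}{x_j+x_i}$, so that their product is a $2N$-fold integral whose entire $z$-dependence is the scalar ${\rm e}^{\xi(z)}\prod_i\frac{z-x_i}{z+x_i}\prod_j\frac{z+y_j}{z-y_j}$ with $\xi(z)=\sum_k z^{2k+1}(t_{2k+1}-\tilde{t}_{2k+1})$. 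Carrying out $\operatorname{Res}_{z=0}\frac{\dd z}{z}$ and then invoking a Pfaffian Cauchy-type identity -- equivalently the Cartan--Pl\"ucker relations satisfied by the Pfaffian minors of a single antisymmetric array -- collapses the $2N$-fold integral into the product of the two unshifted $N$-fold integrals, namely $\tau(\mathbf{t})\tau(\tilde{\mathbf{t}})$. This is the origin of the reformulation announced in the abstract: expanding ${\rm e}^{\sum_{k\ge0}t_{2k+1}x^{2k+1}}=\sum_{n\ge0}q_n(\mathbf{t})x^n$ writes $\tau(\mathbf{t}) = \sum_{\lambda}c_\lambda Q_\lambda$ with the $c_\lambda$ equal to Pfaffian minors of the fixed antisymmetric array $\bigl(\fint\frac{x-y}{x+y}F_m(x)x^aF_n(y)y^b{\rm e}^{V_0(x)+V_0(y)}\bigr)$, and the Cartan relations for these minors are precisely \eqref{biline}.

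The kernel $\frac{x-y}{x+y}$ is singular on $\{x+y=0\}$, so every manipulation above has to be justified with $\fint$ in place of $\int$, and this is where I expect the real difficulty. Because the $t_{2k+1}$ are formal, I would argue coefficient by coefficient: the expansions of the multipliers and of ${\rm e}^{\xi(z)}$ and the extraction of $\operatorname{Res}_{z=0}$ are then purely formal operations, and the entire analytic content is that the de Bruijn unfolding and the subsequent Cauchy-kernel collapse survive as identities between principal-value integrals. The plan is to isolate this in a general lemma (this is Lemma~\ref{lemfinal}), whose proof rests on the extension of de Bruijn's Pfaffian integration formula to the singular kernel (Lemma~\ref{HUHH} and Remark~\ref{therem}): one regularises by $\int_{|x+y|\ge\epsilon}$, where the kernel is bounded and every identity of the previous paragraph holds termwise -- the finiteness of the moments of ${\rm e}^{-\frac12\lambda_{\min}x^2+V_0(x)}\,\dd x$ ensuring absolute convergence -- and one then shows that $\epsilon\to0$ may be taken inside the finitely many integrations that survive the residue, the oddness of $\frac{x-y}{x+y}$ across the excised strip $\{|x+y|<\epsilon\}$ guaranteeing both that the principal value exists and that the limit commutes with $\operatorname{Res}_{z=0}$. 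Granting the lemma, \eqref{biline} holds for $\tau$ and hence for $Z_N$, which is therefore a BKP tau-function.
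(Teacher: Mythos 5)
Your reduction to $\tau(\mathbf{t})=\operatorname{Pf}(K_{N;m,n}(\mathbf{t}))$ and the multiplier computation $\psi_n\bigl(x,\mathbf{t}-2\bigl[z^{-1}\bigr]\bigr)=\frac{z-x}{z+x}\psi_n(x,\mathbf{t})$ are correct and agree with the paper's setup, but the load-bearing step of your argument is asserted rather than proved, and as stated it is false. After unfolding the two shifted Pfaffians into a $2N$-fold integral, the $z$-dependence of the integrand is $\frac{1}{z}{\rm e}^{\xi(z)}\prod_{i}\frac{z-x_i}{z+x_i}\prod_{j}\frac{z+x_j'}{z-x_j'}$, and the residue extraction (equivalently, the contour integral on a circle of radius larger than all $|x_i|$, $|x_j'|$) picks up not only the pole at $z=0$ (which yields the product of the unshifted integrands) but also the simple poles at $z=-x_i$ and $z=x_j'$; for $\mathbf{t}\neq\tilde{\mathbf{t}}$ these carry weights ${\rm e}^{\xi(-x_i)}$, ${\rm e}^{\xi(x_j')}$ and do \emph{not} cancel pointwise. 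No ``Pfaffian Cauchy-type identity'' or Cartan--Pl\"ucker relation collapses the integrand to $\tau(\mathbf{t})\tau\bigl(\tilde{\mathbf{t}}\bigr)$ before integration: this is exactly the content of Lemma~\ref{lemmaRes}, whose right-hand side \eqref{pre-Hirota2} contains the explicit correction terms (its last two lines), and of Remark~\ref{RemarkBKP}, which explains why the Nimmo and Harnad--Balogh results you invoke for a ``regular kernel'' cannot be quoted here: the kernel $\frac{1}{2}\frac{x-y}{x+y}$ is not a choice but the neutral-fermion pair expectation, its antidiagonal singularity is forced by the symmetric contour $\Gamma=\mathbb{R}$, and for such contours the group element \eqref{gGamma} fails the quadratic algebraic condition. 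The missing idea is the paper's Lemma~\ref{thm:BKPlem}: each correction term is antisymmetric under the exchange $x_q\leftrightarrow x_q'$ (same index $q$, so the weight $h_q(x_q)h_q(x_q')\rho(x_q)\rho(x_q')$ is symmetric), hence integrates to zero over a symmetric domain -- the bilinear identity holds only distributionally, after integration.

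Your treatment of the regularisation is also underpowered in two respects. First, excising only $|x+y|<\epsilon$ inside each two-dimensional kernel integral is not the regularisation the argument needs: one must cut the fat antidiagonal $\min_{i<j}\bigl||x_i|-|x_j|\bigr|\geq\epsilon$ in all $2N$ variables simultaneously, both so that the radially ordered correlators are convergent series with the uniform bound $1+\epsilon^{-1}\max(|x_i|,|x_j|)$ on pair expectations and so that the symmetry argument killing the corrections applies; only at the very end does one compare $\mathbf{1}^{2N}_{\infty,\epsilon}$ with the product of pairwise indicators that defines the principal values $\fint$. Second, ``oddness of $\frac{x-y}{x+y}$ across the excised strip'' does not justify commuting $\epsilon\to 0$ with the integrations: the accompanying weight $f_m\bigl(x^2\bigr)f_n\bigl(y^2\bigr){\rm e}^{V_{\mathbf{t}}(x)+V_{\mathbf{t}}(y)}$ has no useful parity across $x+y=0$ for $m\neq n$ or odd times, and the paper instead needs the even/odd splitting ${\rm e}^{\sum t_{2m+1}x^{2m+1}}h_i(x)=h_i^+(x)+xh_i^-(x)$, reduction to the positive orthant, the divided-difference representation \eqref{2ratio} with bounds such as $\frac{x^2+y^2}{x+y}\leq x+y$, and dominated convergence (Lemma~\ref{lemfinal}); a Lemma~\ref{HUHH}-type argument controls one Pfaffian but not the product of two with the residue interleaved. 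So your outline has the right overall architecture, but the two decisive steps -- the explicit correction terms with their distributional vanishing, and a genuine domination argument for $\epsilon\to 0$ -- are absent, and the first cannot be sidestepped by citing the standard references.
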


Expanding the bilinear equation \eqref{biline} near $\mathbf{t} = \tilde{\mathbf{t}} = 0$ yields the BKP hierarchy of equations. In the present case, they give algebraic relations between the odd moments (or the cumulants) of the Kontsevich model with arbitrary potential, defined as
\begin{gather*}
M_{2\ell_1 +1 ,\dots,2\ell_n + 1} = \frac{1}{Z_N(\mathbf{t})} \frac{\partial}{\partial t_{2\ell_1 + 1}} \cdots \frac{\partial}{\partial t_{2\ell_n + 1}} Z_N(\mathbf{t}) \bigg|_{\mathbf{t} = 0} \\
\hphantom{M_{2\ell_1 +1 ,\dots,2\ell_n + 1}}{}
 = \frac{\int_{\mathcal{H}_N} \dd\mathbb{P}_N(H) {\rm e}^{{\rm Tr}(-\frac{1}{2}\Lambda H^2 + V_0(H))} \operatorname{Tr}H^{2\ell_1 + 1} \cdots \operatorname{Tr}H^{2\ell_n + 1}}{\int_{\mathcal{H}_N} \dd\mathbb{P}_N(H) {\rm e}^{{\rm Tr}(-\frac{1}{2}\Lambda H^2 + V_0(H))}}, \\
K_{2\ell_1 + 1,\dots,2\ell_n + 1} = \frac{\partial}{\partial t_{2\ell_1 + 1}} \cdots \frac{\partial}{\partial t_{2\ell_n + 1}} \ln Z(\mathbf{t})\bigg|_{\mathbf{t} = 0} \\
\hphantom{K_{2\ell_1 + 1,\dots,2\ell_n + 1}}{}
 = \sum_{\mathbf{I} = {\rm partitions}\,\,{\rm of}\,\,[n]} (-1)^{|\mathbf{I}| - 1}(|\mathbf{I}| - 1)! \prod_{I \in \mathbf{I}} M_{(\ell_i)_{i \in I}}.
\end{gather*}
Note that the hierarchy of equations \eqref{biline} does not depend on $N$, $\Lambda$ and $V_0$, though the particular solution $Z_N(\mathbf{t})$ does through the initial data $Z_N(\mathbf{0})$.

For instance, the first two BKP-equations are
\begin{gather}
0 = \bigl(D_1^6-5 D_1^3 D_3-5 D_3^2 +9 D_1 D_5 \bigr)(Z_N,Z_N)(\mathbf{0}), \nonumber \\
0 = \bigl(D_1^8 + 7D_1^5D_3 -35D_1^2D_3^2 - 21D_1^3D_5 - 42D_3D_5 + 90D_1D_7\bigr)(Z_N,Z_N)(\mathbf{0})\label{D1D1}
\end{gather}
in terms of the Hirota operators
$D_k(\tau,\tau)(\mathbf{t}) =\bigl(\partial_{t_{k}}-\partial_{\tilde{t}_k}\bigr)
\tau(\mathbf{t})\tau\bigl(\tilde{\mathbf{t}}\bigr)\big|_{\tilde{\mathbf{t}} = \mathbf{t}}$. For \emph{even} $V_0$, we have $M_{2\ell_1 + 1,\dots,2\ell_n + 1} = 0$ for $n$ odd, and \eqref{D1D1} results in
\begin{gather}
0 = M_{1^6} + 15 M_{1^4} M_{1,1} - 5 M_{3,1^3} - 15 M_{3,1}M_{1,1} - 5M_{3,3} + 9 M_{5,1}, \nonumber\\
0 = M_{1^8} + 28M_{1^6}M_{1,1} + 35(M_{1^4})^2 + 7M_{3,1^5} + 70M_{3,1^3}M_{1,1} + 35M_{3,1}M_{1^4} \nonumber\\
\hphantom{0 =}{}
- 35M_{3,3}M_{1,1} - 70(M_{3,1})^2 - 21 M_{5,1^3} - 63M_{5,1}M_{1,1} - 42M_{5,3} + 90 M_{7,1} .\label{Hirota-6}
\end{gather}
Or, equivalently in terms of cumulants,
\begin{gather*}
0 = K_{1^6} + 30K_{1^4}K_{1,1} + 60(K_{1,1})^3 - 5K_{3,1^3} - 5K_{3,3} - 30K_{3,1}K_{1,1} + 9K_{5,1}, \\
0 = K_{1^8} + 56 K_{1^6}K_{1,1} + 70(K_{1^4})^2 + 840 K_{1^4}(K_{1,1})^2 + 840(K_{1,1})^4 + 7K_{3,1^5} \\
\hphantom{0 =}{} + 70K_{3,1}K_{1^4} + 420 K_{3,1}(K_{1,1})^2 + 140K_{3,1^3}K_{1,1} - 35 K_{3,3}K_{1,1} - 70 (K_{3,1})^2 \\
\hphantom{0 =}{} - 21K_{5,1^3} -126 K_{5,1}K_{1,1} - 42 K_{5,3} + 90 K_{7,1}.
\end{gather*}
When $V_0$ is not even, many more terms contribute.

It is instructive to test these equation in the simplest case $V_0 = 0$. The moments can be found,\footnote{Note that in \cite{Mironov:2020tjf}, equation (45) follows from substituting $\Lambda \rightarrow \frac{1}{2}\Lambda$ in equation~(44). Equation~(45) is the one they use to compute moments, and the Gaussian probability measure it induces agrees with our $\mathbb{P}_N$ defined in~\eqref{theGauss}.} e.g., in~\cite{Mironov:2020tjf}, with
$p_k = \operatorname{Tr}\Lambda^{-k}$
\begin{alignat*}{3}
&M_{1,1} = p_1, && & \\
&M_{3,1} = 3 p_1^2, \qquad && M_{1^4} = 3 p_1^2, \\
&M_{1^6} = 15 p_1^3, \qquad && M_{3,1^3} = 6 p_3+9 p_1^3, \\
&M_{3,3} =3 p_3+12 p_1^3, \qquad && M_{5,1} = 5 p_3+10 p_1^3.
\end{alignat*}
They satisfy the first equation of \eqref{Hirota-6}, as expected.

\section{Proof of Theorem~\ref{th:1}}
\label{S2}
A Hermitian matrix $H$ decomposes as
\begin{gather*}
H = \sum_{a = 1}^N (\operatorname{Re} H_{a,a}) E_{a,a}+ \sum_{1 \leq a < b \leq N} \bigl(\sqrt{2} \operatorname{Re} H_{a,b}\bigr) \tfrac{1}{\sqrt{2}}(E_{a,b} + E_{b,a}) \\
\hphantom{H =}{}
 + \bigl(\sqrt{2} \operatorname{Im} H_{a,b}\bigr) \tfrac{{\rm i}}{\sqrt{2}}(E_{a,b} - E_{b,a}).
\end{gather*}
As $E_{a,a}$, $\frac{1}{\sqrt{2}}(E_{a,b} + E_{b,a})$ and $\frac{{\rm i}}{\sqrt{2}}(E_{a,b} - {\rm E}_{b,a})$ have unit norm for the standard Euclidean metric on ${\rm Mat}_{N}(\mathbb{C}) \cong \mathbb{R}^{2N^2}$, the volume form on $\mathcal{H}_N$ induced by the Euclidean volume form on ${\rm Mat}_N(\mathbb{C}) \cong \mathbb{R}^{2N^2}$ is $2^{\frac{N(N - 1)}{2}}\,\dd H$. Denote $\mathcal{U}_N$ the unitary group and $\dd\nu$ its volume form induced by the Euclidean volume form in ${\rm Mat}_N(\mathbb{C})$. The corresponding volume is
 \[
 {\rm Vol}(\mathcal{U}_N) = \frac{(2\pi)^{\frac{N(N + 1)}{2}}}{\prod_{n = 1}^{N - 1} n!}.
\]
We also recall the Harish-Chandra--Itzykson--Zuber formula \cite{Itzykson:1979fi}
\[
\frac{1}{\prod_{n = 1}^{N - 1} n!} \int_{\mathcal{U}_N} \frac{\dd \nu(U)}{{\rm Vol}(\mathcal{U}_N)} {\rm e}^{{\rm Tr}(AUBU^{\dagger})} = \frac{\det\bigl({\rm e}^{a_ib_j}\bigr)}{\Delta(\boldsymbol{a})\Delta(\boldsymbol{b})},
\]
 where $A = \operatorname{diag}(a_1,\dots,a_N)$ and $B = \operatorname{diag}(b_1,\dots,b_N)$.

Diagonalising the matrix $H = UXU^{\dagger}$ with $X = \operatorname{diag}(x_1,\dots,x_N)$ and $U \in \mathcal{U}_N$ defined up to action of $\mathfrak{S}_N \times \mathcal{U}_1^{N}$ brings the partition function \eqref{Zdef} in the form
\begin{align}
Z_N(\mathbf{t}) & = \frac{\sqrt{\Delta(\boldsymbol{\lambda},\boldsymbol{\lambda})}}{2^{\frac{N}{2}} (2\pi)^{\frac{N^2}{2}}} \frac{1}{N! (2\pi)^N 2^{\frac{N(N - 1)}{2}}} \nonumber\\
&\quad{}\times \int_{\mathbb{R}^N} \biggl(\int_{\mathcal{U}_N} \dd \nu(U) {\rm e}^{-\frac{1}{2}{\rm Tr}(\Lambda U X^2 U^{\dagger})}\biggr) (\Delta(\boldsymbol{x}))^2 \prod_{i = 1}^N {\rm e}^{V_{\mathbf{t}}(x_i)}\dd x_i \nonumber \\
& = \frac{\sqrt{\Delta(\boldsymbol{\lambda},\boldsymbol{\lambda})}}{2^{\frac{N^2}{2}} (2\pi)^{\frac{N}{2}} N! \Delta\bigl(-\frac{\boldsymbol{\lambda}}{2}\bigr)} \int_{\mathbb{R}^N} \frac{(\Delta(\boldsymbol{x}))^2}{\Delta\bigl(\boldsymbol{x}^2\bigr)} \mathop{\det}_{1 \leq i,j \leq N} \bigl({\rm e}^{-\frac{1}{2}\lambda_ix_j^2}\bigr) \prod_{i = 1}^N {\rm e}^{V_{\mathbf{t}}(x_i)} \dd x_i. \label{Ztint}
\end{align}
Here we could use Fubini because the integrand in the first line of \eqref{Ztint} is real positive, and in fact integrable due to the assumptions on $V_0$ and $\Lambda$. We observe that \smash{$\Delta\bigl(-\frac{\boldsymbol{\lambda}}{2}\bigr) = (-2)^{-\frac{N(N - 1)}{2}}\Delta(\boldsymbol{\lambda})$} and recall Schur's Pfaffian identity \cite{Schur1911}, for $N$ even
\[
\frac{(\Delta(\boldsymbol{x}))^2}{\Delta\bigl(\boldsymbol{x}^2\bigr)} = \prod_{1 \leq i < j \leq N} \frac{x_j - x_i}{x_j + x_i} = \operatorname{Pf}_{1 \leq i,j \leq N}\biggl(\frac{x_j - x_i}{x_j + x_i}\biggr).
\]
So, up to a prefactor, $Z_N(\mathbf{t})$ is an integral of the form
\begin{equation}
\label{Sint}
\int_{\mathbb{R}^N} \operatorname{Pf}_{1 \leq i,j \leq N} (S(x_i,x_j)) \mathop{{\rm det}}_{\substack{0 \leq m \leq N - 1 \\ 1 \leq j \leq N}} \bigl(f_m\bigl(x_j^2\bigr)\bigr) \prod_{i = 1}^N \rho(x_i)\,\dd x_i.
\end{equation}
De Bruijn's identity \cite{deBruijn1955} would allow rewriting \eqref{Sint} as
\begin{equation}
\label{dJ}
N! \operatorname{Pf}_{0 \leq m,n \leq N - 1}\biggl(\int_{\mathbb{R}^2} S(x,y) f_m\bigl(x^2\bigr)f_n\bigl(y^2\bigr)\rho(x)\rho(y)\,\dd x \dd y\biggr),
\end{equation}
but the proof in loc.\ cit.\ is solely based on algebraic manipulations, valid when $(f_n)_{n = 0}^{N - 1}$ is a~sequence of measurable functions on $\mathbb{R}_{\geq 0}$ and $S(x,y) = -S(y,x)$ is a measurable function on $\mathbb{R}^2$ such that $\int_{\mathbb{R}^2} \big|S(x,y)f_m\bigl(x^2\bigr)f_n\bigl(y^2\bigr)\big|\rho(x)\rho(y)\dd x\dd y < +\infty$. The choice of $S(x,y) = \frac{x - y}{x + y}$ in general violates this integrability assumption due to the presence of the simple pole on the anti-diagonal combined with the non-compactness of $\mathbb{R}^2$. Nevertheless, we show that the conclusion~\eqref{dJ} remains valid provided the integral in the Pfaffian is understood as a Cauchy principal value, under a Schwartz-type condition.

\begin{Lemma}
\label{HUHH}Let $\rho > 0$ be a measurable function on $\mathbb{R}$ and $\mathbb(f_n)_{n = 0}^{N - 1}$ be a sequence of $\mathcal{C}^{N - 1}$-functions on $\mathbb{R}_{\geq 0}$ such that $f^{(\ell)}_m$ is bounded by a polynomial for any $m,\ell \in \{0,\dots,N - 1\}$. Let $S(x,y) = \frac{\tilde{S}(x,y)}{x + y}$ where $\tilde{S}$ is a measurable function on $\mathbb{R}^2$ such that
\[
\forall k,l \in \mathbb{Z}_{\geq 0}\qquad \int_{\mathbb{R}^2} \bigl|\tilde{S}(x,y)x^ky^l\bigr|\rho(x)\rho(y)\,\dd x \dd y < +\infty.
\]
Then, for $N$ even
\begin{gather*}
 \int_{\mathbb{R}^N} \operatorname{Pf}_{1 \leq i,j \leq N} (S(x_i,x_j)) \mathop{\det}_{\substack{0 \leq m \leq N - 1 \\
 1 \leq j \leq N}} \bigl(f_m\bigl(x_j^2\bigr)\bigr) \prod_{i = 1}^n \rho(x_i)\dd x_i \\
\qquad\quad{} = N!\operatorname{Pf}_{0 \leq m,n \leq N - 1}\biggl(\fint_{\mathbb{R}^2} S(x,y) f_m\bigl(x^2\bigr) f_n\bigl(y^2\bigr)\rho(x)\rho(y)\,\dd x \dd y\biggr),
\end{gather*}
where $\fint = \lim_{\epsilon \rightarrow 0} \int_{|x + y| \geq \epsilon}$ and the integrand in the left-hand side is integrable.
\end{Lemma}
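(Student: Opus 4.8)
The plan is to reduce to the classical, purely algebraic de~Bruijn identity by a cut-off regularisation and then to control both sides as the cut-off is removed. I would set $S_\epsilon(x,y) = \mathbf{1}_{|x+y|\ge\epsilon}\,S(x,y)$, which is still antisymmetric and, unlike $S$, satisfies the integrability hypothesis of de~Bruijn's theorem: indeed $|S_\epsilon(x,y)|\le\epsilon^{-1}|\tilde S(x,y)|$, and since each $f_m(x^2)$ is bounded by a polynomial in $x$, the assumed moment bounds on $\tilde S$ give $\int_{\mathbb{R}^2}|S_\epsilon(x,y)f_m(x^2)f_n(y^2)|\rho(x)\rho(y)\,\dd x\,\dd y<+\infty$. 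The classical identity then yields, for every $\epsilon>0$, $\int_{\mathbb{R}^N}\Pf(S_\epsilon(x_i,x_j))\det_{m,j}(f_m(x_j^2))\prod_i\rho(x_i)\,\dd x_i = N!\,\Pf_{m,n}(K_{\epsilon;m,n})$ with $K_{\epsilon;m,n}=\int_{|x+y|\ge\epsilon}S(x,y)f_m(x^2)f_n(y^2)\rho(x)\rho(y)\,\dd x\,\dd y$, and it remains to pass to the limit $\epsilon\to0$ on both sides.

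The mechanism making everything work is that the determinant annihilates the poles of the Pfaffian on the anti-diagonals. Since each $f_m$ is $\mathcal{C}^{N-1}$ with polynomially bounded derivatives, the alternant $\det_{m,j}(f_m(x_j^2))$ is divisible by the Vandermonde $\prod_{k<l}(x_l^2-x_k^2)$ with a quotient bounded by a polynomial (a Hermite--Genocchi / divided-difference representation), so $|\det_{m,j}(f_m(x_j^2))|\le P(\mathbf{x})\prod_{k<l}|x_k-x_l|\,|x_k+x_l|$ for some polynomial $P$. Expanding $|\Pf(S_\epsilon(x_i,x_j))|\le\sum_M\prod_{\{i,j\}\in M}|S(x_i,x_j)|$ over perfect matchings $M$ and writing $|S|=|\tilde S|/|x+y|$, the factors $\prod_{\{i,j\}\in M}|x_i+x_j|^{-1}$ are exactly cancelled by the matching factors in $\prod_{k<l}|x_k+x_l|$. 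This produces a majorant $\Phi(\mathbf{x})=\tilde P(\mathbf{x})\sum_M\prod_{\{i,j\}\in M}|\tilde S(x_i,x_j)|$ independent of $\epsilon$; distributing the monomials of $\tilde P$ over the pairs of each matching and using Fubini, $\int_{\mathbb{R}^N}\Phi\prod_i\rho\,\dd x_i$ is a finite sum of products of the convergent moment integrals $\int_{\mathbb{R}^2}|\tilde S(x,y)x^ky^l|\rho(x)\rho(y)\,\dd x\,\dd y$. Dominated convergence, together with $\Pf(S_\epsilon)\to\Pf(S)$ pointwise off the anti-diagonals, gives convergence of the left-hand side to the unregularised integral, which is thereby shown to be absolutely integrable.

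For the right-hand side I would prove that each entry $K_{\epsilon;m,n}$ converges, using the same cancellation in symmetrised form. Averaging over $x\leftrightarrow y$ and using $\tilde S(x,y)=-\tilde S(y,x)$ gives $K_{\epsilon;m,n}=\tfrac12\int_{|x+y|\ge\epsilon}\frac{\tilde S(x,y)}{x+y}\big(f_m(x^2)f_n(y^2)-f_m(y^2)f_n(x^2)\big)\rho(x)\rho(y)\,\dd x\,\dd y$. The $2\times2$ alternant in the bracket vanishes on $x^2=y^2$ and hence factors as $(x-y)(x+y)\,h_{m,n}(x,y)$ with $h_{m,n}$ bounded, so the pole is removed and the integrand is absolutely integrable by the moment bounds. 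Consequently the principal value exists, equals the ordinary integral $\tfrac12\int_{\mathbb{R}^2}\tilde S(x,y)(x-y)h_{m,n}(x,y)\rho(x)\rho(y)\,\dd x\,\dd y$, and $K_{\epsilon;m,n}\to\fint S(x,y)f_m(x^2)f_n(y^2)\rho(x)\rho(y)\,\dd x\,\dd y$ by dominated convergence. Since the Pfaffian is a polynomial in its entries, $\Pf(K_\epsilon)\to\Pf(K_0)$, and combining the two limits proves the identity.

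I expect the main obstacle to be the left-hand limit: the naive bound $|\Pf(S_\epsilon)|\le\sum_M\prod|S|$ destroys the cancellation and is not integrable, so the real work is the rearrangement above, in which the $|x_i+x_j|^{-1}$ singularities of each matching are matched against the factors of $\prod_{k<l}|x_k+x_l|$ coming from the determinant \emph{before} taking absolute values. This is the point where the evenness of the $f_m(x_j^2)$, equivalently the Schur-type zero of the determinant on the anti-diagonals, is essential: it is what simultaneously guarantees a uniform integrable majorant on the left and, in symmetrised form, the existence of the principal-value entries on the right.
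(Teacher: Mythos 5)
Your proposal is correct and follows essentially the same route as the paper: cut off at $|x+y|\geq\epsilon$, apply de~Bruijn's identity, and obtain an $\epsilon$-uniform integrable majorant by factoring the alternant as $\Delta\bigl(\boldsymbol{x}^2\bigr)$ times divided differences controlled through the Hermite--Genocchi simplex representation, so that the $|x_i+x_j|^{-1}$ poles of each matching are cancelled before absolute values are taken, with dominated convergence concluding. Your explicit symmetrisation argument showing that each principal-value entry exists as an absolutely convergent integral is a detail the paper asserts without spelling out, and is a welcome addition --- note only that the quotient $h_{m,n}$ is polynomially bounded rather than bounded, which your moment hypotheses on $\tilde{S}$ handle anyway.
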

\begin{proof} Take $\epsilon > 0$ and set $S_{\epsilon}(x,y) = S(x,y) \cdot \mathbf{1}_{|x + y| \geq \epsilon}$. In this situation, we can use de Bruijn's formula and write
\begin{gather}
\nonumber \int_{\mathbb{R}^N} \operatorname{Pf}_{1 \leq i,j \leq N} (S_{\epsilon}(x_i,x_j)) \mathop{\det}_{\substack{0 \leq m \leq N - 1 \\ 1 \leq j \leq N}} \bigl(f_m\bigl(x_j^2\bigr)\bigr) \prod_{i = 1}^N \rho(x_i)\dd x_i \\
\qquad {} = N!\operatorname{Pf}_{0 \leq m,n \leq N - 1} \biggl(\int_{\mathbb{R}^2} S_{\epsilon}(x,y)f_m\bigl(x^2\bigr)f_n\bigl(y^2\bigr)\rho(x)\rho(y)\,\dd x \dd y\biggr).\label{LHs}
\end{gather}
The right-hand side tends to
\[
N! \operatorname{Pf}_{0 \leq m,n \leq N - 1} \biggl(\fint_{\mathbb{R}^2} S(x,y) f_m\bigl(x^2\bigr)f_n\bigl(y^2\bigr)\rho(x)\rho(y)\dd x \dd y\biggr)
\]
 when ${\epsilon \rightarrow 0}$. Call $I_{\epsilon}(\mathbf{x})$ the integrand in the left-hand side of formula~\eqref{LHs}. We clearly have $\lim_{\epsilon \rightarrow 0} I_{\epsilon}(\mathbf{x}) = I_0(\mathbf{x})$ for $\mathbf{x}$ almost everywhere in $\mathbb{R}^N$. Provided we can find for $I_{\epsilon}(\mathbf{x})$ a uniform in $\epsilon$ and integrable on~$\mathbb{R}^2$ upper bound, the lemma follows from dominated convergence.

To find such a bound, we introduce the matrix $W(\boldsymbol{\xi})$ with entries $\xi_j^{n}$ at row index $n \in \{0,\dots,N - 1\}$ and column index $j \in [N]$, which satisfies $\Delta(\boldsymbol{\xi}) = \det W(\boldsymbol{\xi})$. Its inverse matrix is
\[
\bigl(W(\boldsymbol{\xi})^{-1}\bigr)_{i,n} = \frac{(-1)^{N - n - 1}e_{N - n - 1}\bigl(\boldsymbol{\xi}_{[i]}\bigr)}{\prod_{j \neq i} (\xi_i - \xi_j)},
\]
where $e_k$ is the $k$-th elementary symmetric polynomial and $\boldsymbol{\xi}_{[i]} = \bigl(\xi_1,\dots,\widehat{\xi_i},\dots,\xi_N\bigr)$. Then
\begin{gather}
\nonumber \mathop{\det}_{\substack{0 \leq m \leq N - 1 \\ 1 \leq j \leq N }} \bigl(f_m\bigl(x_j^2\bigr)\bigr) = \Delta\bigl(\boldsymbol{x}^2\bigr) \det_{\substack{1 \leq i \leq N \\ 0 \leq n \leq N - 1}} \bigl(f_n\bigl(x_i^2\bigr)\bigr) \cdot \det \bigl(W\bigl(\boldsymbol{x}^2\bigr)^{-1}\bigr) \\
\hphantom{\mathop{\det}_{\substack{0 \leq m \leq N - 1 \\ 1 \leq j \leq N }} \bigl(f_m\bigl(x_j^2\bigr)\bigr)}{} = \Delta\bigl(\boldsymbol{x}^2\bigr) \det_{0 \leq m,n \leq N - 1} \Biggl(\sum_{i = 1}^{N} \frac{(-1)^{N - m - 1} f_n\bigl(x_i^2\bigr) e_{N - m - 1}\bigl(\boldsymbol{x}^2_{[i]}\bigr)}{\prod_{j \neq i} \bigl(x_i^2 - x_j^2\bigr)}\Biggr).\label{26}
\end{gather}
Up to a sign that we can take out of the determinant, the $(m,n)$-entry inside the determinant is
\begin{align*}
 \bigl[u^{N - m - 1}\bigr] \sum_{i = 1}^N f_n\bigl(x_i^2\bigr) \prod_{j \neq i} \frac{1 + ux_j^2}{x_i^2 - x_j^2} & = \bigl[u^{N - m - 1}\bigr] \prod_{i = 1}^N \bigl(1 + ux_i^2\bigr)\!\Biggl( \sum_{i = 1}^N \frac{f_n\bigl(x_i^2\bigr)}{1 + ux_i^2} \frac{1}{\prod_{j \neq i} (x_i^2 - x_j^2)}\Biggr) \\
& = \sum_{k = 0}^{N - m - 1} e_{N - m - 1 - k}\bigl(\boldsymbol{x}^2\bigr)\!\Biggl( \sum_{i = 1}^{N} \frac{(-1)^k x_i^{2k} f_{n}\bigl(x_i^2\bigr)}{\prod_{j \neq i} \bigl(x_i^2 - x_j^2\bigr)}\Biggr).
\end{align*}
In the first two steps, $[u^m]$ acting on the formal power series of $u$ to its right meant extracting the coefficient of $u^m$. Up to the use of squared variables, we recognise the divided difference
\[
g[\xi_1,\dots,\xi_N] := \sum_{i = 1}^N \frac{g(\xi_i)}{\prod_{j \neq i} (\xi_i - \xi_j)}.
\]
When $g$ is $\mathcal{C}^{N - 1}$, it can be written (see, e.g., \cite[Theorem 2, p.~250]{AnaN}) as an integral over the $(N - 1)$-dimensional simplex $\Delta_{N - 1} = \{p \in [0,1]^N \mid p_1 + \cdots + p_N = 1\}$, equipped with the volume form $\dd \sigma(\boldsymbol{p}) = \dd p_1 \cdots \dd p_{N - 1}$:
\[
g[\xi_1,\dots,\xi_N] = \int_{\Delta_{N - 1}} g^{(N - 1)}(p_1\xi_1 + \cdots + p_N\xi_N)\,\dd \sigma(\boldsymbol{p}).
\]
We use this for $g_{k,n}(\xi) = (-1)^{k}\xi^k f_n(\xi)$. Inserting the integral representation in \eqref{26} yields
\begin{gather*}
|I_{\epsilon}(\mathbf{x})| = \bigl|\Delta\bigl(\boldsymbol{x}^2\bigr)\bigr| \operatorname{Pf}_{1 \leq i,j \leq N}(S_{\epsilon}(x_i,x_j) ) \\
 \times \Biggl| \mathop{{\rm det}}_{0 \leq m,n \leq N - 1}\!\Biggl( \sum_{k = 0}^{N - 1 - m}\! e_{N - 1 - m - k}\bigl(\boldsymbol{x}^2\bigr) \int_{\Delta_{N - 1}} g_{k,n}^{(N - 1)}\bigl(p_1x_1^2 + \cdots + p_N x_N^2\bigr) \dd \sigma(\boldsymbol{p})\Biggr)\Biggr| \prod_{i = 1}^N \rho(x_i).
\end{gather*}
Since $\bigl|\Delta\bigl(\boldsymbol{x}^2\bigr)\bigr|$ cancels the denominators in $S_{\epsilon}$, the first line of the right-hand side admits an upper bound
by sum of terms, each of which is a polynomial in $\boldsymbol{x}$ multiplied by \smash{$\prod_{\{i,j\} \in \mathcal{P}} \bigl|\tilde{S}(x_i,x_j)\bigr|$}, where $\mathcal{P}$ is a partition of $[N]$ into pairs. In the second line, we first expand the determinant inside the absolute value and use the triangular inequality to get an upper bound by a sum of finitely many positive terms, each of which involves an $N$-fold product of simplex integrals of functions with at most polynomial growth, since the derivatives $f_{n}^{(\ell)}$ \big(and thus $g_{k,n}^{(N - 1)}$\big) have at most polynomial growth. Therefore, they result in a polynomial upper bound in the variable $\boldsymbol{x}$. We are thus left with an upper bound by a sum of finitely many terms of the form
\[
\prod_{\{i,j\} \in \mathcal{P}} \bigl|\tilde{S}(x_i,x_j)\bigr| \prod_{i = 1}^{N} x_i^{q_i}\rho(x_i) = \prod_{\{i,j\} \in \mathcal{P}} \bigl|\tilde{S}(x_i,x_j)\bigr|x_i^{q_i}x_j^{q_j} \rho(x_i)\rho(x_j)
\]
for various $N$-tuples of integers $\boldsymbol{q}$ and pair partitions $\mathcal{P}$ of $[N]$. Integrating each term of this form over $\mathbb{R}^{N}$ factorizes into a product of $\frac{N}{2}$ two-dimensional integrals, each of them being finite by assumption. This provides the domination assumption to conclude $\smash{\lim_{\epsilon \rightarrow 0} \int_{\mathbb{R}^N} I_{\epsilon}(\boldsymbol{x}) \prod_{i = 1}^N \dd x_i} = \smash{\int_{\mathbb{R}^N} I_0(\boldsymbol{x}) \prod_{i = 1}^N \dd x_i}$ as desired.
\end{proof}

\begin{Remark}
\label{therem} The proof can easily be adapted to obtain an analogous statement for kernels of~the form \smash{$S(x,y) = \frac{\tilde{S}(x,y)}{x - y}$}, in which case one can use $f_n(x)$ instead of $f_n\bigl(x^2\bigr)$.
\end{Remark}

The assumptions of Lemma~\ref{HUHH} are fulfilled for
\[
S(x,y) = \frac{x - y}{x + y},\qquad \rho(x) = {\rm e}^{-\frac{1}{2}\lambda_{\min} x^2 + V_{\mathbf{t}}(x)},\qquad f_m(\xi) = {\rm e}^{-\frac{1}{2}(\lambda_{m+1} - \lambda_{\min})\xi},
\]
where we stress that $\mathbf{t}$ are formal parameters. Therefore, coming back to \eqref{Ztint} and tracking the $N$-dependent prefactors, we arrive to the identity of formal power series in the variables $\mathbf{t}$:
\begin{gather*}
Z_N(\mathbf{t}) = \frac{(-1)^{\frac{N(N - 1)}{2}} \sqrt{\Delta(\boldsymbol{\lambda},\boldsymbol{\lambda})}}{2^N \pi^{\frac{N}{2}} \Delta(\boldsymbol{\lambda})} \operatorname{Pf}_{1 \leq m,n \leq N} \bigl(L_{m,n}\bigr), \\
L_{m,n} = \biggl(\fint_{\mathbb{R}^2} \frac{x - y}{x + y} {\rm e}^{-\frac{1}{2}\lambda_m x^2 - \frac{1}{2}\lambda_n y^2 + V_{\mathbf{t}}(x) + V_{\mathbf{t}}(y)} \,\dd x \dd y\biggr).
\end{gather*}

We would like to rewrite this formula by absorbing the denominator $\Delta(\boldsymbol{\lambda})$ in the Pfaffian. Recall the transposed Vandermonde matrix $W(\boldsymbol{\lambda})^{\mathsf T}$, whose entries are $W(\boldsymbol{\lambda})_{i,n}^{\mathsf T} = \lambda_i^{n}$ indexed by $i \in [N]$ and $n \in \{0,\dots,N - 1\}$. We have
\[
\frac{{\rm Pf}(L)}{\Delta(\boldsymbol{\lambda})} = \frac{{\rm Pf}(L)}{\det W(\boldsymbol{\lambda})^{\mathsf T}} = {\rm Pf}\bigl(\bigl(W(\boldsymbol{\lambda})^{\mathsf T}\bigr)^{-1} LW(\boldsymbol{\lambda})^{-1}\bigr),
\]
and by Cramer's formula for the inverse
\[
\bigl(\bigl(W(\boldsymbol{\lambda})^{\mathsf T}\bigr)^{-1} L W(\boldsymbol{\lambda})^{\mathsf T}\bigr)_{m,n} = v_mv_n\fint_{\mathbb{R}^2} \frac{x - y}{x + y} F_{N;m}(x)F_{N;n}(y) {\rm e}^{V_{\mathbf{t}}(x) + V_{\mathbf{t}}(y)}\,\dd x\dd y,
\]
where $v_n$ are non-zero constants to be chosen later, rows and columns are indexed by $m,n \in \{0,\dots,N - 1\}$, and we introduced
\begin{align*}
F_{N;m}(x) &{}= \frac{1}{v_m}\sum_{i = 1}^{N} \bigl(W(\boldsymbol{\lambda})^{\mathsf T}\bigr)^{-1}_{i,m} {\rm e}^{-\frac{1}{2}\lambda_i x^2} \\
&{}= \frac{\det\bigl(\lambda_i^0 \big| \lambda_i^1  \big| \cdots \big| \lambda_i^{m - 1} \big| {\rm e}^{-\frac{1}{2}\lambda_i x^2} \big| \lambda_i^{m + 1} \big| \cdots \big| \lambda_i^{N - 1}\bigr)}{v_m \Delta(\boldsymbol{\lambda})}.
\end{align*}
With Taylor formula in integral form at order $N$ near $0$, we can write
\begin{gather*}
{\rm e}^{-\frac{1}{2}\lambda_ix^2} = P_{N - 1}\bigl(-\tfrac{1}{2}\lambda_ix^2\bigr) + \frac{\bigl(-\frac{1}{2}\lambda_i x^2\bigr)^m}{m!} + R_{N}\bigl(-\tfrac{1}{2}\lambda_i x^2\bigr), \\
R_N(\xi) = \frac{\xi^{N}}{(N - 1)!} \int_{0}^{1}(1-u)^{N - 1} {\rm e}^{\xi u}\, \dd u
\end{gather*}
for some polynomial $P_{N - 1}$ of degree at most $N - 1$ and without its term of degree $m$ (which we wrote separately). The contribution of $P_{N - 1}$ disappears as it is a linear combination of the other columns, while the contribution of the degree $m$ term simply retrieves the Vandermonde determinant. Hence,
\[
F_{N;m}(x) = \frac{(-1)^mx^{2m}}{2^m m! v_m} + \frac{\det\bigl(\lambda_i^0  \big| \lambda_i^1  \big| \cdots \big| \lambda_i^{m - 1} \big| R_N\bigl(-\tfrac{1}{2}\lambda_ix^2\bigr) \big| \lambda_i^{m + 1} \big| \cdots \big| \lambda_i^{N - 1}\bigr)}{v_m \Delta(\boldsymbol{\lambda})}.
\]
We now choose $v_m = \frac{(-1)^m}{2^m m!}$ to get $F_{N;m}(x) = x^{2m} + O\bigl(x^{2N}\bigr)$ when $x \rightarrow 0$. Introducing the matrix
\[
K_{N;m,n}(\mathbf{t}) = \fint_{\mathbb{R}^2} \frac{x - y}{x + y}F_{N;m}(x)F_{N;n}(y) {\rm e}^{V_{\mathbf{t}}(x) + V_{\mathbf{t}}(y)}\, \dd x \dd y,
\]
we arrive to
\begin{equation*}
\begin{split}
Z_N(\mathbf{t}) & = \frac{(-1)^{\frac{N(N - 1)}{2}} \sqrt{\Delta(\boldsymbol{\lambda},\boldsymbol{\lambda})} \prod_{n = 0}^{N - 1} v_n}{2^N \pi^{\frac{N}{2}}}   \operatorname{Pf}_{0 \leq m,n \leq N - 1} K_{N;m,n}(\mathbf{t}) \\
& = \frac{\sqrt{\Delta(\boldsymbol{\lambda},\boldsymbol{\lambda})}}{2^{\frac{N^2}{2}} (2\pi)^{\frac{N}{2}} \prod_{n = 1}^{N - 1} n!}  \operatorname{Pf}_{0 \leq m,n \leq N - 1} K_{N;m,n}(\mathbf{t}).
\end{split}
\end{equation*}

 \section{Proof of Corollary~\ref{Comain}}
\label{SproofBKP}

\subsection{Preliminaries}
It is well-known that the BKP integrable hierarchy can be formulated
in terms of
neutral fermions $(\phi_j)_{j \in \mathbb{Z}}$ which satisfy
the anti-commutation relations
$ \{\phi_j,\phi_k\} =(-1)^j \delta_{j+k,0}$,
in particular $(\phi_0)^2=\tfrac{1}{2}$. There is a highest-weight representation of the algebra of neutral fermions, with highest-weight vector (or vacuum) $|0\rangle$
which satisfies
$\phi_{-j} |0\rangle =0 =\langle 0| \phi_{j}$ for $j>0$ and
$\langle 0|\phi_0|0\rangle=0$. The pair expectation values are
\begin{equation}
\label{pair-expectation}
 \langle 0| \phi_j \phi_k |0\rangle =
 \begin{cases}
 (-1)^k\delta_{j,-k}& \text{if}\ k>0,\\
\frac{1}{2}\delta_{j,0}& \text{if}\ k=0,\\
 0& \text{if}\ k<0.
 \end{cases}
 \end{equation}
 We introduce the generating series\footnote{Up to Lemma~\ref{lemmaRes},
 all statements hold for $x,x_i,x_i'\in \mathbb{C}$. The restriction to
 real variables is motivated by the intended integration.}
$ \phi(x): = \sum_{j \in \mathbb {Z}} x^j \phi_j $
for $x \in \mathbb{R}$. Vacuum expectations of
products of $\phi(x_i)$ are understood in a radial ordering. If
all $|x_i|$ are pairwise distinct, then
\begin{gather}
 \langle 0|\phi(x_1)\cdots \phi(x_{N})|0\rangle
:=(-1)^{\mathrm{sign}(\pi)}
\langle 0|\phi\bigl(x_{\pi(1)}\bigr)\cdots \phi\bigl(x_{\pi(n)}\bigr)|0\rangle\label{radialordering}
\end{gather}
if $\bigl|x_{\pi(1)}\bigr|>\bigl|x_{\pi(2)}\bigr|>\dots >\bigl|x_{\pi(N)}\bigr|$.

From \eqref{pair-expectation}, one finds
\begin{equation*}
 \langle 0|\phi(x_1)\phi(x_2)|0\rangle
 =\frac{1}{2} \frac{x_1-x_2}{x_1+x_2},
\end{equation*}
understood as convergent power series in
$\frac{x_1}{x_2}$ for $|x_1|<|x_2|$ and as convergent power series in
$\frac{x_2}{x_1}$ for $|x_2|<|x_1|$. The following is known as Wick's theorem:
For pairwise different $|x_i|$, one has for $N$ even
\begin{align}
\nonumber
 \langle 0|\phi(x_1)\phi(x_2)\cdots
 \phi(x_{N})|0\rangle
 & =\operatorname{Pf}_{1\leq k,l \leq N}
 (\langle 0|\phi(x_k)\phi(x_l)|0\rangle) \\
 &= \frac{1}{2^{\frac{N}{2}}}
 \operatorname{Pf}_{1\leq k,l \leq N}
 \biggl(\frac{x_k-x_l}{x_k+x_l}\biggr)
= \frac{1}{2^{\frac{N}{2}}}\prod_{1\leq k<l\leq N}
\frac{x_k-x_l}{x_k+x_l},\label{Wick}
\end{align}
and for $N$ odd $\langle 0|\phi(x_1)\phi(x_2)\cdots
 \phi(x_{N})|0\rangle=0$.

Next, consider the source operators
\begin{equation*}
\forall m \in \mathbb{Z}_{\geq 0}, \qquad J_m = \frac{1}{2} \sum_{j\in\mathbb{Z}}(-1)^j(\phi_{-j-m}\phi_{j}-
 \langle 0 |\phi_{-j-m}\phi_{j}|0\rangle).
\end{equation*}
One checks that all even $J_{2m}$ vanish identically, and that the $(J_{2m+1})_{m \geq 0}$ commute with each other. This gives rise to an infinite
family of commuting BKP flows
\[
\gamma(\mathbf{t}) := {\rm e}^{\sum_{m=0}^\infty J_{2 m+1} t_{2m+1} },\qquad
\mathbf{t}=(t_1,t_3,t_5,\dots).
\]
These satisfy $\gamma(\mathbf{t}) \gamma\bigl(\tilde{\mathbf{t}}\bigr)
= \gamma\bigl(\mathbf{t}+\tilde{\mathbf{t}}\bigr)$ and $\gamma(\mathbf{t})|0\rangle
=|0\rangle$. One finds $[J_{2m+1},\phi_j]=\phi_{j-(2m+1)}$ which leads to
$[J_{2m+1},\phi(x)]=x^{2m+1} \phi(x)$ and
\begin{equation}
\label{gammatphiz}
 \gamma(\mathbf{t}) \phi(x)=
 {\rm e}^{\sum_{m \geq 0} x^{2 m+1} t_{2m+1} } \phi(x)\gamma(\mathbf{t}).
\end{equation}

\subsection{A residue formula}

Let $N\in \mathbb{Z}_{\geq 0}$ be even. For pairwise distinct $|x_1|,\dots,|x_N|$
and a sequence of formal variables $\mathbf{t}=(t_1,t_3,\dots)$, we consider
\[
 \tau(\mathbf{t};x_1,\dots,x_N):= \langle 0 | \gamma(\mathbf{t})
 \phi(x_1)\cdots \phi(x_N)|0\rangle.
\]
Let $\bigl[z^{-1}\bigr]:=\bigl(\frac{1}{z},\frac{1}{3z^3},
\frac{1}{5 z^5},\dots\bigr)$.
 The group law implies
 \begin{equation}
 \tau\bigl(\mathbf{t}-2\bigl[z^{-1}\bigr];x_1,\dots,x_N\bigr) =\langle 0|
 \gamma\bigl(-2\bigl[z^{-1}\bigr]\bigr)
 \gamma(\mathbf{t}) \phi(x_1)\cdots \phi(x_N) |0\rangle.
\label{tau-shift1}
\end{equation}
When commuting $\gamma(-2[z^{-1}])$ to the right just before $|0\rangle$,
we see that the exponentials produced by \eqref{gammatphiz}
are well defined for $|z|>\max_i |x_i|$. We use
\cite[Lemma 7.3.9]{harnad_balogh_2021}
\[
 \langle 0|\gamma\bigl(-2\bigl[z^{-1}\bigr]\bigr)=2
 \langle 0|\phi_0 \phi(z)
\]
and
$\langle 0|\phi_0 a|0 \rangle=
\langle 0|a\phi_0 |0 \rangle$ for any element $a$ of the Clifford algebra
generated by $\phi_j$ (see, e.g., \cite[Exercise 7.5]{harnad_balogh_2021})
as well as \eqref{gammatphiz} to turns \eqref{tau-shift1} into
\begin{equation}
{\rm e}^{\sum_{m \geq 0} z^{2m+1}t_{2m+1}} \tau\bigl(\mathbf{t}-2\bigl[z^{-1}\bigr];x_1,\dots,x_N\bigr)
= 2 \langle 0|\gamma(\mathbf{t}) \phi(z)
\phi(x_1)\cdots \phi(x_N)\phi_0|0\rangle .
 \label{tau-shift2}
\end{equation}
Recall that $\tau\bigl(\mathbf{t}-2\bigl[z^{-1}\bigr];x_1,\dots,x_N\bigr)$ is a convergent
power series in $z^{-1}$ if $|z|>\max_i |x_i|$. At the very end the
$t_{2m+1}$ will be formal parameters. But at this intermediate point we
choose $|t_{2m+1}|<\tilde{R}^{-2m-1}$ for some large
$\tilde{R}> R\geq \max_i |x_i|$. Then \eqref{tau-shift2} is analytic
in $z$ in the domain $R<|z|<\tilde{R}$. We multiply by a second copy
$2 \langle 0|\gamma(\tilde{\mathbf{t}}) \phi(-z) \phi(x'_1)\cdots
\phi(x'_N)\phi_0|0\rangle$ with analogous parameter range. The
product is analytic in $z$ in the domain $\max_i \{|x_i|,|x_i'|\} \leq R <|z|<\tilde{R}$
and has there a convergent Laurent series expansion in $z$. We
multiply by $z^{-1}$, take the contour integral around a positively oriented circle $C$ in this annular domain, and make the radial
ordering \eqref{radialordering} explicit:
\begin{align*}
A:={}& \frac{1}{2{\rm i}\pi} \oint_{C} \biggl(\frac{\dd z}{z}
{\rm e}^{\sum_{m \geq 0} z^{2m+1}(t_{2m+1}-\tilde{t}_{2m+1})}
\tau\bigl(\mathbf{t}-2\bigl[z^{-1}\bigr];x_1,\dots,x_N\bigr)
\\ &
\times \tau\bigl(\tilde{\mathbf{t}}+2\bigl[z^{-1}\bigr];x_1',\dots,x_N'\bigr)\biggr)
\\
={}&  4 \oint_{C} \frac{\dd z}{z}
\langle 0|\gamma(\mathbf{t}) \phi(z) \phi(x_1)\cdots \phi(x_N) \phi_0|0\rangle
\langle 0|\gamma\bigl(\tilde{\mathbf{t}}\bigr) \phi(-z) \phi(x_1')\cdots \phi(x_N')
\phi_0|0\rangle.
\end{align*}
This would be evaluated to
\begin{gather}
A = 4 (-1)^{\mathrm{sign}(\pi)+\mathrm{sign}(\pi')}\label{pre-Hirotabis}
\\ \hphantom{A =}{}
 \times\sum_{j\in \mathbb{Z}}(-1)^j
\langle 0|\gamma(\mathbf{t}) \phi_j \phi\bigl(x_{\pi(1)}\bigr)\cdots \phi\bigl(x_{\pi(N)}\bigr)
\phi_0|0\rangle
\langle 0|\gamma\bigl(\tilde{\mathbf{t}}\bigr) \phi_{-j} \phi\bigl(x_{\pi'(1)}'\bigr)\cdots \phi\bigl(x_{\pi'(N)}'\bigr)
\phi_0|0\rangle\nonumber
\end{gather}
if the sum over $j$ converged absolutely, where we have introduced the permutations $\pi$, $\pi'$ such that
\[
R>\bigl|x_{\pi(1)}\bigr|>\cdots >\bigl|x_{\pi(N)}\bigr| \qquad \text{and} \qquad R>\bigl|x'_{\pi'(1)}\bigr|>\cdots >\bigl|x'_{\pi'(N)}\bigr|.
\]
The next lemma justifies the convergence under mild additional conditions on $x$'s and $x'$'s.
\begin{Lemma}\label{lemmaRes}
Let $\tilde{R} >\! R >\! 0$, assume $|t_{2m+1}|,\bigl|\tilde{t}_{2m+1}\bigr|\!<\! \tilde{R}^{-2m-1}$, that $|x_1|,\dots,|x_N|,|x_1'|,\dots,|x'_N|$ are pairwise distinct and such that $\max_i\{|x_i|,|x_i'|\} \leq R$. Then
\begin{gather}
 \frac{1}{2{\rm i}\pi} \oint_{C} \biggl(\frac{\dd z}{z}
{\rm e}^{\sum_{m \geq 0} z^{2m+1}(t_{2m+1}-\tilde{t}_{2m+1})}
\tau\bigl(\mathbf{t}-2\bigl[z^{-1}\bigr];x_1,\dots,x_N\bigr)
\tau\bigl(\tilde{\mathbf{t}}+2\bigl[z^{-1}\bigr];x_1',\dots,x_N'\bigr)\biggr) \nonumber
\\
 =\tau(\mathbf{t};x_1,\dots,x_N)
\tau\bigl(\tilde{\mathbf{t}};x_1',\dots,x_N'\bigr) \nonumber
\\
 -4 \sum_{p=1}^N (-1)^p
\langle 0|\gamma(\mathbf{t}) \phi(x_1) \cdots \widehat{\phi\bigl(x_p\bigr)} \cdots
\phi(x_N) \phi_0|0\rangle
\langle 0|\gamma\bigl(\tilde{\mathbf{t}}\bigr) \phi\bigl(x_p\bigr) \phi(x_1')\cdots \phi(x_N')
\phi_0|0\rangle \nonumber
\\
 -4 \sum_{q=1}^N (-1)^q
\langle 0|\gamma(\mathbf{t}) \phi\bigl(x_q'\bigr)\phi(x_1)\cdots
\phi(x_N) \phi_0|0\rangle
\langle 0|\gamma\bigl(\tilde{\mathbf{t}}\bigr) \phi(x_1')
\cdots \widehat{\phi\bigl(x_q'\bigr)} \cdots \phi(x_N')
\phi_0|0\rangle.\!\label{pre-Hirota2}
\end{gather}
The expectation values are understood as radially
ordered, see \eqref{radialordering}, so that they represent convergent
power series in ratios $\frac{x_i}{x_j}$ when $|x_i|<|x_j|$
\big(and similar ratios involving $x_i'$\big). The Laurent series in $z$
on the left-hand side converges for $R < |z| <\tilde{R}$.
\begin{proof}
 In the right-hand side of \eqref{pre-Hirotabis}
 we anti-commute both $\phi_{\pm j}$ to the right, but there is
 a~distinguished order to proceed. If at some step $\phi_j$ sits left of
$\phi\bigl(x_{\pi(p)}\bigr)$ and
$\phi_{-j}$ left of $\phi\bigl(x'_{\pi'(q)}\bigr)$,
\begin{itemize}\itemsep=0pt
\item we anti-commute $\phi_j$ to the right through
$\phi\bigl(x_{\pi(p)}\bigr)$ if
$\bigl|x_{\pi(p)}\bigr|>\bigl|x'_{\pi'(q)}\bigr|$ or $q-1=N$;
\item we anti-commute $\phi_{-j}$ to the right
 through $\phi\bigl(x'_{\pi'(q)}\bigr)$ if
$\bigl|x'_{\pi'(q)}\bigr|>\bigl|x_{\pi(p)}\bigr|$ or $p-1=N$.
\end{itemize}
The procedure stops at $p{-}1=q{-}1=N$ and
produces after the final step
\[
\sum_{j\in \mathbb{Z}} (-1)^j\langle 0|\Phi \phi_j\phi_0|0\rangle
\langle 0|\Phi' \phi_{-j}\phi_0|0\rangle =\frac{1}{4}
\langle 0|\Phi|0\rangle\langle 0|\Phi'|0\rangle.
\]
In the anti-commutators the sum over $j$ is restored in the other factor,
where it produces
$\phi\bigl(x_{\pi'(q-1)}\bigr) \phi\bigl(x_{\pi(p)}\bigr)\phi\bigl(x_{\pi'(q)}\bigr)$
or $\phi\bigl(x_{\pi(p-1)}\bigr) \phi\bigl(x'_{\pi'(q)}\bigr)\phi\bigl(x_{\pi(p)}\bigr)$, respectively.
The resulting expectation values evaluate by
Wick's theorem \eqref{Wick} to polynomials in the pair
expectations. Under the new condition
that all $|x_i|$, $|x_i'|$ are pairwise different, every
pair expectation becomes a~convergent power series.
This is the assertion \eqref{pre-Hirota2}
with radial ordering \eqref{radialordering} made
explicit.
\end{proof}
\end{Lemma}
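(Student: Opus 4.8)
The plan is to begin from the factorised form of $A$ obtained just above \eqref{pre-Hirotabis},
\[
A = 4\oint_{C}\frac{\dd z}{z}\,\langle 0|\gamma(\mathbf{t})\phi(z)\phi(x_1)\cdots\phi(x_N)\phi_0|0\rangle\,\langle 0|\gamma(\tilde{\mathbf{t}})\phi(-z)\phi(x_1')\cdots\phi(x_N')\phi_0|0\rangle,
\]
and to carry out the $z$-integration by expanding $\phi(z)=\sum_{j}z^j\phi_j$ and $\phi(-z)=\sum_{k}(-z)^k\phi_k$ into Laurent modes. Since $\frac{1}{2{\rm i}\pi}\oint_{C}\frac{\dd z}{z}\,z^j(-z)^k=(-1)^k\delta_{j+k,0}$, the contour couples the two factors through $\sum_{j}(-1)^j\phi_j\otimes\phi_{-j}$, which reproduces the candidate expression \eqref{pre-Hirotabis}. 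The whole task then reduces to making sense of, and evaluating, the resulting sum over $j$.

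The engine of the evaluation is the contraction $\{\phi_j,\phi(x)\}=(-1)^j x^{-j}$, read off from $\{\phi_j,\phi_k\}=(-1)^j\delta_{j+k,0}$. I would anti-commute $\phi_j$ to the right in the first factor and $\phi_{-j}$ to the right in the second factor. At each crossing of a field $\phi(x)$ there is a \emph{pass-through} term $-\phi(x)\phi_j$ and a \emph{contraction} term $(-1)^j x^{-j}$; in the latter case the scalar $(-1)^j x_{\pi(p)}^{-j}$ combines with the prefactor $(-1)^j$ and with the surviving $\phi_{-j}$ in the partner factor, and $\sum_{j}x_{\pi(p)}^{-j}\phi_{-j}=\phi(x_{\pi(p)})$ reinserts that very field into the partner factor at the position currently held by $\phi_{-j}$. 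These contraction events generate precisely the $p$-sum and $q$-sum on the right-hand side of \eqref{pre-Hirota2}, whereas the single configuration in which both $\phi_j$ and $\phi_{-j}$ pass all fields and reach $\phi_0$ contributes, via $\sum_{j}(-1)^j\langle 0|\Phi\phi_j\phi_0|0\rangle\langle 0|\Phi'\phi_{-j}\phi_0|0\rangle=\tfrac14\langle 0|\Phi|0\rangle\langle 0|\Phi'|0\rangle$ and the overall factor $4$, the leading product $\tau(\mathbf{t};x_1,\dots,x_N)\tau(\tilde{\mathbf{t}};x_1',\dots,x_N')$; here the radial-ordering signs $(-1)^{\mathrm{sign}(\pi)+\mathrm{sign}(\pi')}$ cancel in squares.

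To make this a clean enumeration I would fix a distinguished order of anti-commutation: advance whichever of $\phi_j$, $\phi_{-j}$ currently faces the field of larger modulus, forcing the remaining one once its factor is exhausted. This rule guarantees that a field reinserted by a contraction always lands in a slot compatible with the radial ordering \eqref{radialordering}, so that after applying Wick's theorem \eqref{Wick} every surviving object is a product of pair expectations $\langle 0|\phi(x_k)\phi(x_l)|0\rangle=\tfrac12\frac{x_k-x_l}{x_k+x_l}$ expanded on the correct side. This is where the hypothesis that the $|x_i|,|x_i'|$ be pairwise distinct (with $\max_i\{|x_i|,|x_i'|\}\leq R$) enters: it makes each such factor a genuinely convergent power series in a ratio $x_i/x_j$.

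I expect the main obstacle to be analytic rather than combinatorial. The passage to \eqref{pre-Hirotabis} interchanges $\oint_{C}$ with the infinite mode sum, and the sum over $j$ need not converge absolutely as written; this is controlled by the annular analyticity set up before the statement. For $\max_i\{|x_i|,|x_i'|\}\leq R<|z|<\tilde{R}$ both fermionic factors are convergent Laurent series in $z$, while the bounds $|t_{2m+1}|,|\tilde{t}_{2m+1}|<\tilde{R}^{-2m-1}$ keep the flow exponentials produced by \eqref{gammatphiz} bounded on $C$, so term-by-term integration and residue extraction are legitimate. The distinguished ordering then finishes the job: each partial resummation over $j$ closes up into a single convergent field rather than an ill-defined tail, which is exactly the absolute convergence invoked below \eqref{pre-Hirotabis}, and \eqref{pre-Hirota2} follows.
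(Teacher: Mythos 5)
Your proposal is correct and takes essentially the same route as the paper's proof: it passes from the contour-coupled form to \eqref{pre-Hirotabis} via mode expansion, then anti-commutes $\phi_{\pm j}$ to the right in the same modulus-governed distinguished order, with each contraction $\{\phi_j,\phi(x)\}=(-1)^jx^{-j}$ resumming over $j$ to reinsert $\phi\bigl(x_{\pi(p)}\bigr)$ (or $\phi\bigl(x'_{\pi'(q)}\bigr)$) into the partner factor at a radially ordered slot, and the terminal configuration evaluated by $\sum_{j}(-1)^j\langle 0|\Phi\phi_j\phi_0|0\rangle\langle 0|\Phi'\phi_{-j}\phi_0|0\rangle=\tfrac14\langle 0|\Phi|0\rangle\langle 0|\Phi'|0\rangle$, exactly as in the paper. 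Your handling of convergence -- pairwise distinct moduli making each Wick pair expectation a convergent series, and the $j$-sums closing into single fields rather than ill-defined tails -- likewise matches the paper's justification, so there is nothing to add.
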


\subsection{Integration away from the (anti)diagonal}

Let us introduce the complement of the fat (anti)diagonal, first in a bounded version
\[
D^N_{R,\epsilon}:=\bigl\{\boldsymbol{x} \in \mathbb{R}^{N} \mid
\max_i |x_i| \leq R \ \text{and} \ \min_{i < j} ||x_i|-|x_j||\geq \epsilon\bigr\},
\]
and $\mathbf{1}^{N}_{R,\epsilon}$ be its indicator function. We will soon pass to the unbounded version $\mathbf{1}^N_{\infty,\epsilon}
=\lim_{R\to \infty} \mathbf{1}^N_{R,\epsilon}$.

\begin{Lemma}
\label{thm:BKPlem}
Let $N$ be an even natural number and $\epsilon >0$. Let
$h_1,\dots,h_N$ be continuously differentiable functions on $\mathbb{R}$
such that $h_i$ and $h_i'$ are bounded by a
polynomial, and $\rho$ is a~positive even function admitting moments
of all order. Then the integral $($with expectation value in $\tau$
understood as radially ordered, see \eqref{radialordering}$)$ is a
well-defined formal power series in
$\bigl(t_{2m+1},\tilde{t}_{2m + 1}\bigr)_{m \geq 0}$ which satisfies
\begin{gather*}
 \int_{\mathbb{R}^{2N}} \Res_{z = 0} \frac{\dd z}{z}
{\rm e}^{\sum_{m \geq 0} z^{2m+1}(t_{2m+1}-\tilde{t}_{2m+1})} \tau\bigl(\mathbf{t} - 2\bigl[z^{-1}\bigr];\boldsymbol{x}\bigr) \tau\bigl(\tilde{\mathbf{t}} + 2\bigl[z^{-1}\bigr];\boldsymbol{x}'\bigr) \\
 \qquad\quad{} \times \mathbf{1}_{\infty,\epsilon}^{2N}(\boldsymbol{x},\boldsymbol{x}') \prod_{i = 1}^{N} h_i(x_i)h_i(x_i')\rho(x_i)\rho(x_i') \, \dd x_i \dd x_i' \\
\qquad {}= \int_{\mathbb{R}^{2N}} \tau(\mathbf{t};\boldsymbol{x}) \tau\bigl(\tilde{\mathbf{t}};\boldsymbol{x}'\bigr)\mathbf{1}_{\infty,\epsilon}^{2N}(\boldsymbol{x},\boldsymbol{x}') \prod_{i = 1}^{N} h_i(x_i)h_i(x_i') \rho(x_i)\rho(x_i')\,\dd x_i \dd x_i'.
\end{gather*}
\end{Lemma}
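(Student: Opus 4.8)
The plan is to integrate the pointwise residue identity of Lemma~\ref{lemmaRes} against the measure $\mathbf{1}_{\infty,\epsilon}^{2N}(\boldsymbol{x},\boldsymbol{x}')\prod_{i=1}^N h_i(x_i)h_i(x_i')\rho(x_i)\rho(x_i')$ and to show that the two correction sums on the right-hand side of \eqref{pre-Hirota2} integrate to zero. First I would identify the $\Res_{z=0}\frac{\dd z}{z}(\cdots)$ appearing in the integrand with the annular contour integral of Lemma~\ref{lemmaRes}, interpreted coefficientwise in the formal parameters $\mathbf{t},\tilde{\mathbf{t}}$ (for each monomial the residue is the extraction of a single Laurent coefficient, so no convergence in $\mathbf{t}$ is needed). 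Since the support of $\mathbf{1}_{\infty,\epsilon}^{2N}$ consists of configurations with pairwise distinct moduli, Lemma~\ref{lemmaRes} applies at every such point --- choosing $R=1+\max_i\{|x_i|,|x_i'|\}$ --- and rewrites the integrand as $\tau(\mathbf{t};\boldsymbol{x})\tau(\tilde{\mathbf{t}};\boldsymbol{x}')$ minus the two correction sums.

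Next I would make the neutral-fermion expectation values fully explicit. Commuting $\gamma$ to the right through \eqref{gammatphiz} and using $\gamma(\mathbf{t})|0\rangle=|0\rangle$ together with Wick's theorem \eqref{Wick}, every factor $\langle 0|\gamma(\mathbf{t})\phi(y_1)\cdots\phi(y_m)|0\rangle$ equals $\prod_i \mathrm{e}^{\theta_{\mathbf{t}}(y_i)}$ times a Schur--Pfaffian $\prod_{k<l}\frac{y_k-y_l}{y_k+y_l}$, with $\theta_{\mathbf{t}}(y)=\sum_{m\ge0}t_{2m+1}y^{2m+1}$; the extra $\phi_0$ in the correction terms contracts to the constant $\tfrac12$ by \eqref{pair-expectation}. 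Two consequences: each such factor is an \emph{antisymmetric} function of its arguments $y_1,\dots,y_m$ (the exponential prefactor is symmetric, the Schur--Pfaffian antisymmetric), and on the $\epsilon$-separated region one has $|y_k+y_l|\ge\big||y_k|-|y_l|\big|\ge\epsilon$, so every kernel is bounded by $(|y_k|+|y_l|)/\epsilon$. Coefficientwise in $\mathbf{t},\tilde{\mathbf{t}}$ the exponentials become symmetric polynomials, so the integrands are products of polynomials, poly-bounded Schur--Pfaffian factors, the polynomially bounded $h_i$, and $\rho$, whence they are absolutely integrable by the moment assumption on $\rho$. This legitimises splitting the integral into the main term plus the two correction integrals and performing changes of variables inside them.

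The key step is the vanishing of each correction integral, and this follows term by term from a sign-flipping involution. Consider the $p$-th summand of the first correction sum in \eqref{pre-Hirota2}: its first factor depends only on the unprimed variables $x_i$ with $i\ne p$ (it omits $x_p$ and contains no primed variable), while its second factor $\langle 0|\gamma(\tilde{\mathbf{t}})\phi(x_p)\phi(x_1')\cdots\phi(x_N')\phi_0|0\rangle$ is antisymmetric in $(x_p,x_1',\dots,x_N')$ and contains $x_p'$. Apply the measure-preserving involution that swaps the integration variables $x_p\leftrightarrow x_p'$. The second factor then has its arguments in positions $1$ and $p+1$ exchanged, so it picks up a sign $-1$; the first factor is untouched; the weight factor $h_p(x_p)h_p(x_p')\rho(x_p)\rho(x_p')$ is symmetric in $(x_p,x_p')$; and the indicator $\mathbf{1}_{\infty,\epsilon}^{2N}$, depending only on the multiset of the $2N$ moduli, is invariant. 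Hence this summand equals its own negative and integrates to zero. The second correction sum is handled identically with the involution $x_q\leftrightarrow x_q'$, its first factor now being the antisymmetric one containing both $x_q$ and $x_q'$. Therefore both correction integrals vanish and the integrated identity reduces to the asserted equality.

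The main obstacle is not the combinatorics --- which the involution dispatches cleanly --- but the analytic bookkeeping of the second paragraph: one must verify, coefficientwise in the formal variables, that every piece is absolutely integrable so that the pointwise identity may be integrated termwise and the variable swaps are justified. The decisive point is that the $\epsilon$-gap built into $\mathbf{1}_{\infty,\epsilon}^{2N}$ keeps all configurations away from the anti-diagonal $x+y=0$, converting the singular kernels $\frac{x-y}{x+y}$ into polynomially bounded functions; the evenness of $\rho$ and the index-paired form $h_i(x_i)h_i(x_i')$ of the weight are exactly what make the indicator and the weight invariant under the swaps $x_p\leftrightarrow x_p'$, so that the cutoff --- rather than obstructing the argument --- is compatible with it.
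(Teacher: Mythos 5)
Your proposal is correct and takes essentially the same approach as the paper: integrate the pointwise identity of Lemma~\ref{lemmaRes}, kill both correction sums via the antisymmetry of the Wick-expanded correlators under the swap $x_p\leftrightarrow x_p'$ against the symmetric weight and the permutation-invariant $\epsilon$-separated domain, and justify everything coefficientwise in $\mathbf{t},\tilde{\mathbf{t}}$ via the gap bound $|x_i+x_j|\geq\bigl||x_i|-|x_j|\bigr|\geq\epsilon$ on the pair expectations together with the polynomial bounds on $h_i$, $h_i'$ and the finite moments of $\rho$ (the paper organises this last step as dominated convergence on $D_{R,\epsilon}$ with $R\to\infty$, which rests on the same estimate). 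One immaterial quibble: the evenness of $\rho$ plays no role in this lemma --- symmetry of $\rho(x_p)\rho(x_p')$ under the swap is automatic --- and is only needed later, in the proof of Lemma~\ref{lemfinal}.
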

\begin{proof}
Now we consider $\mathbf{t}$, $\tilde{\mathbf{t}}$ formal. Consider a term
\[
\langle 0|\gamma(\mathbf{t}) \phi(x_q')\phi(x_1)\cdots
\phi(x_N) \phi_0|0\rangle
\langle 0|\gamma(\tilde{\mathbf{t}}) \phi(x_1')\cdots \widehat{\phi(x_q')} \cdots
\phi(x_N')
\phi_0|0\rangle
\]
appearing in the last line of
\eqref{pre-Hirota2}, which by Wick's theorem \eqref{Wick}
is anti-symmetric when exchanging
$x_q'\leftrightarrow x_q$. This term is
multiplied by a function symmetric in
$x_q'\leftrightarrow x_q$ and integrated over a
symmetric domain, thus integrating to zero for every $1\leq q\leq N$.
Similarly for the next-to-last line of \eqref{pre-Hirota2}.

In particular, this is valid for integration over the symmetric domain
$D_{R,\epsilon}$ for fixed $R$, $\epsilon$. We now change the perspective
and view \eqref{pre-Hirota2} as formal power series in the variables
$\mathbf{t}$, $\tilde{\mathbf{t}}$. After commuting $\gamma(\mathbf{t})$
and $\gamma(\tilde{\mathbf{t}})$ via \eqref{gammatphiz} to the right,
the extraction of the coefficient of some monomial in
$\mathbf{t}$, $\tilde{\mathbf{t}}$ yields a polynomial of bounded degree
in $x_p$, $x'_q$ and $z$. Next, commuting in the first line of
\eqref{pre-Hirota2} the $\gamma\bigl(\pm 2\bigl[z^{-1}\bigr]\bigr)$ via \eqref{gammatphiz}
to the right we see that only finitely many terms contribute to the
contour integral, and the coefficients are again polynomials in $x_p$
and $x'_q$ of bounded degree. Finally remains the expectation values
which by Wick's theorem \eqref{Wick} factor into polynomials of pair
expectation values $\langle 0|\phi(x_i)\phi(x_j)|0\rangle$. These can
be estimated by a geometric series which is bounded\footnote{Such a
 bound is necessary to apply the dominated convergence theorem. It
 forces us to keep $\epsilon$ until the very end and it will lead to
 Cauchy principal values.} by $1+\epsilon^{-1}\max(|x_i|,|x_j|)$.
Since $\rho$ has finite moments on $\mathbb{R}$, the limit
$R\to \infty$ exists by the dominated convergence theorem. In
particular, the vanishing of the last two lines of \eqref{pre-Hirota2}
after integration remains true for $R\to \infty$.
\end{proof}

\subsection{Regularisation and Pfaffian expression}

\begin{Lemma}
\label{lemfinal} In the same setting as Lemma~$\ref{thm:BKPlem}$, we have
\begin{equation*}
\begin{split}
\Res_{z = 0} \frac{\dd z}{z} {\rm e}^{\sum_{m \geq 0} z^{2m + 1} (t_{2m + 1} - \tilde{t}_{2m + 1})} \mathcal{T}_N\bigl(\mathbf{t} - 2\bigl[z^{-1}\bigr]\bigr) \mathcal{T}_N\bigl(\tilde{\mathbf{t}} + 2\bigl[z^{-1}\bigr]\bigr) = \mathcal{T}_N(\mathbf{t})\mathcal{T}_N\bigl(\tilde{\mathbf{t}}\bigr),
\end{split}
\end{equation*}
where
\begin{gather}
\mathcal{T}_N(\mathbf{t}) = \operatorname{Pf}_{1 \leq i,j \leq N} \biggl( \fint_{\mathbb{R}^2} H_{i,j}(\mathbf{t};x,y) \rho(x)\rho(y)\, \dd x \dd y \biggr), \nonumber\\
H_{i,j}(\mathbf{t};x,y) = {\rm e}^{\sum_{m \geq 0} t_{2m + 1}(x^{2m + 1} + y^{2m + 1})} \cdot \frac{1}{2} \frac{x - y}{x + y}\cdot h_i(x)h_j(y) . \label{Hijt}
\end{gather}
\end{Lemma}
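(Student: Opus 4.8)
The plan is to identify $\mathcal{T}_N(\mathbf{t})$ with a regularised $N$-fold integral of the neutral-fermion correlator $\tau(\mathbf{t};\boldsymbol{x})$, and then to read off the bilinear relation from the fermionic identity of Lemma~\ref{thm:BKPlem} by letting the cutoff $\epsilon\to0$. First I would record, from \eqref{gammatphiz}, $\gamma(\mathbf{t})|0\rangle=|0\rangle$ and Wick's theorem \eqref{Wick}, the closed form
\[
\tau(\mathbf{t};\boldsymbol{x})=\frac{1}{2^{N/2}}\,{\rm e}^{\sum_{i}\sum_{m} t_{2m+1}x_i^{2m+1}}\prod_{1\le i<j\le N}\frac{x_i-x_j}{x_i+x_j},
\]
which is antisymmetric under permutations of $\boldsymbol{x}$. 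Since the weight $\prod_i\rho(x_i)\,\dd x_i$ is symmetric, antisymmetrising the product $\prod_i h_i(x_i)$ turns it into $\frac{1}{N!}\det_{i,j}\bigl(h_i(x_j)\bigr)$; combining this with de~Bruijn's identity in the principal-value form of Lemma~\ref{HUHH} and Remark~\ref{therem} (which accommodate the non-even functions $h_i$ and the kernel $\frac{x-y}{x+y}$) gives the dictionary
\[
\mathcal{T}_N(\mathbf{t})=\lim_{\epsilon\to0}\int_{\mathbb{R}^N}\tau(\mathbf{t};\boldsymbol{x})\,\mathbf{1}^{N}_{\infty,\epsilon}(\boldsymbol{x})\prod_{i=1}^N h_i(x_i)\rho(x_i)\,\dd x_i .
\]
Here the fat-antidiagonal cutoff reproduces the Cauchy principal value entering $\mathcal{T}_N$, because $||x_i|-|x_j||=|x_i+x_j|$ along the antidiagonal, while the extra exclusion of the diagonals $x_i=x_j$ is harmless, the integrand being regular there.

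With this dictionary in hand, I would run the $\epsilon\to0$ limit in the identity of Lemma~\ref{thm:BKPlem}, valid for every fixed $\epsilon>0$. On the right-hand side the weight is the \emph{joint} cutoff $\mathbf{1}^{2N}_{\infty,\epsilon}(\boldsymbol{x},\boldsymbol{x}')$, which I would write as $\mathbf{1}^{N}_{\infty,\epsilon}(\boldsymbol{x})\,\mathbf{1}^{N}_{\infty,\epsilon}(\boldsymbol{x}')$ minus the indicator of the ``cross'' region where all within-group separations stay $\ge\epsilon$ but some pair with one index in $\boldsymbol{x}$ and one in $\boldsymbol{x}'$ is $\epsilon$-close. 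The first piece factorises exactly into a product of two $N$-fold integrals, which by the dictionary above tends to $\mathcal{T}_N(\mathbf{t})\mathcal{T}_N\bigl(\tilde{\mathbf{t}}\bigr)$. The decisive structural fact is that $\tau(\mathbf{t};\boldsymbol{x})\tau\bigl(\tilde{\mathbf{t}};\boldsymbol{x}'\bigr)$ has no poles \emph{across} the two groups---its only singularities sit on the within-group antidiagonals $x_i=-x_j$ and $x_i'=-x_j'$---so the cross region carries no singularity and one expects its contribution to drop out in the limit.

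On the left-hand side I would treat $\mathbf{t},\tilde{\mathbf{t}}$ as formal and extract the coefficient of a fixed monomial. As already observed in the proof of Lemma~\ref{thm:BKPlem}, after commuting the $\gamma$'s to the right via \eqref{gammatphiz} only finitely many powers of $z$ survive, so $\Res_{z=0}$ commutes with the $2N$-fold integral; factorising the integrand by the same cross-region splitting and reassembling the $z$-dependence recovers
\[
\Res_{z=0}\frac{\dd z}{z}\,{\rm e}^{\sum_{m\ge0} z^{2m+1}(t_{2m+1}-\tilde t_{2m+1})}\,\mathcal{T}_N\bigl(\mathbf{t}-2\bigl[z^{-1}\bigr]\bigr)\mathcal{T}_N\bigl(\tilde{\mathbf{t}}+2\bigl[z^{-1}\bigr]\bigr).
\]
Equating the two limits then yields the asserted Hirota bilinear relation of type~B.

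I expect the main obstacle to be precisely the decoupling of the joint regularisation into the product one as $\epsilon\to0$, i.e.\ showing that the cross-region contribution vanishes. This cannot be done by a crude (measure)$\times$(sup) bound: the integrand is not absolutely integrable, the within-group antidiagonal poles persist, and each $N$-fold integral converges only as a principal value, so the naive estimate of the cross region blows up like a negative power of $\epsilon$. The genuine mechanism is cancellation, and the control must come from the uniform-in-$\epsilon$ estimate on the pair expectations $|\langle0|\phi(x_i)\phi(x_j)|0\rangle|\le\tfrac12\bigl(1+\epsilon^{-1}\max(|x_i|,|x_j|)\bigr)$ used in Lemma~\ref{thm:BKPlem}, together with the finiteness of all moments of $\rho$, fed into a dominated-convergence argument applied coefficientwise in $\mathbf{t},\tilde{\mathbf{t}}$. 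Concretely I would integrate out the $\epsilon$-localised variable first while keeping the remaining principal-value integral uniformly bounded; it is this step---the reason, as the footnote to Lemma~\ref{thm:BKPlem} warns, that $\epsilon$ must be retained until the very end---that forms the technical heart of the argument.
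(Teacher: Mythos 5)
Your skeleton coincides with the paper's: insert the Wick/Pfaffian form of $\tau(\mathbf{t};\boldsymbol{x})$ into Lemma~\ref{thm:BKPlem}, send $\epsilon \to 0$, and identify the limit of the right-hand side with $\mathcal{T}_N(\mathbf{t})\mathcal{T}_N\bigl(\tilde{\mathbf{t}}\bigr)$ by comparing the joint cutoff $\mathbf{1}^{2N}_{\infty,\epsilon}$ with products of pairwise cutoffs. But at the step you yourself flag as the technical heart there is a genuine gap: the control you propose -- the pair-expectation estimate $\bigl|\langle 0|\phi(x_i)\phi(x_j)|0\rangle\bigr| \leq \tfrac{1}{2}\bigl(1+\epsilon^{-1}\max(|x_i|,|x_j|)\bigr)$ fed into dominated convergence -- cannot close the argument, because that bound is \emph{not} uniform in $\epsilon$; it degenerates precisely in the limit $\epsilon \to 0$ you need to take. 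In the paper this estimate is used only for the $R \to \infty$ limit at \emph{fixed} $\epsilon$ (that is the content of Lemma~\ref{thm:BKPlem} and of the footnote you cite); it says nothing about $\epsilon \to 0$. You correctly diagnose that a crude (measure)$\times$(sup) bound fails and that cancellation must enter, but you never exhibit the mechanism -- ``integrate out the $\epsilon$-localised variable first'' is left unsubstantiated, so the decoupling of the joint regularisation into the product one, and hence the whole lemma, remains unproved in your outline.

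The paper's actual device, which your proposal is missing, is elementary but decisive: decompose ${\rm e}^{\sum_{m} t_{2m+1}x^{2m+1}}h_i(x) = h_i^+(x) + x\,h_i^-(x)$ into even and odd parts, fold the integral onto $\mathbb{R}_{>0}^{2N}$, and -- exploiting the symmetry of $D_{\infty,\epsilon}$ under $x_i \leftrightarrow x_j$ -- replace each kernel by its symmetrisation $\tilde{H}_{i,j}$ of \eqref{fromHtoHtilde}. The singular factor then survives only inside difference quotients $\frac{f(x)g(y)-f(y)g(x)}{x-y}$, which the integral representation \eqref{2ratio} bounds polynomially (using that $h_i^{\pm}$ and their first derivatives grow at most polynomially), while the remaining prefactors satisfy $0 < \frac{x^2+y^2}{x+y} \leq x+y$ and $0 \leq \frac{x^2y^2}{x+y} \leq \frac{1}{16}(x+y)^3$ on $\mathbb{R}_{\geq 0}$. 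Each of the resulting $8^N$ terms is thus absolutely integrable with an $\epsilon$-\emph{independent} dominating function, after which both the $\epsilon \to 0$ limit of \eqref{expanH} and the replacement of $\mathbf{1}^{2N}_{\infty,\epsilon}$ by the product of pairwise indicators (your ``cross region'') follow at once from dominated convergence: no cancellation across the two groups needs to be tracked once the within-pair singularity has been removed. A secondary inaccuracy of the same origin: your dictionary step appeals to Lemma~\ref{HUHH} and Remark~\ref{therem} for the non-even functions $h_i$ with the kernel $\frac{x-y}{x+y}$, but the lemma requires even arguments $f_m\bigl(x^2\bigr)$ and the remark treats kernels of the form $\frac{\tilde{S}(x,y)}{x-y}$, so neither applies as stated; moreover the cutoff $||x_i|-|x_j|| \geq \epsilon$ excises a neighbourhood of the diagonal as well as of the antidiagonal, and declaring the diagonal strip ``harmless'' again presupposes exactly the uniform domination your argument lacks. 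Both points are repaired simultaneously by the even/odd symmetrisation above.
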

\begin{proof}
Commuting $\gamma(\mathbf{t})$ with \eqref{gammatphiz} to the right and using
Wick's theorem \eqref{Wick}, one can rewrite for even~$N$
\[
\tau(\mathbf{t};x_1,\dots,x_N) = \operatorname{Pf}_{1 \leq i,j \leq N} \biggl({\rm e}^{\sum_{m \geq 0} t_{2m + 1}(x_i^{2m + 1} + x_j^{2m + 1})}\cdot \frac{1}{2} \frac{x_i - x_j}{x_i + x_j}\biggr).
\]
Inserting this in Lemma~\ref{thm:BKPlem} and expanding the two Pfaffians, we get
\begin{gather}
 \int_{\mathbb{R}^{2N}} \Biggl[ \mathbf{1}_{\infty,\epsilon}^{2N}(\boldsymbol{x},\boldsymbol{x}') \prod_{i = 1}^{N} \rho(x_i)\rho(x_i')\dd x_i \dd x_i' \Biggr] \nonumber
 \\
 \qquad\quad{} \times \oint_C\frac{\dd z}{2{\rm i}\pi z} {\rm e}^{\sum_{m \geq 0} z^{2m+1}(t_{2m+1}-\tilde{t}_{2m+1})}
 \sum_{\pi,\pi'\in \mathfrak{S}_N}
 \frac{ (-1)^{\mathrm{sign}(\pi)+\mathrm{sign}(\pi')}}{
 2^{N}(N!)^2} \nonumber
\\
\qquad \quad{} \times \prod_{i=1}^N H_{\pi(2i-1),\pi(2i)}\bigl(\mathbf{t}
-2\bigl[z^{-1}\bigr];x_{\pi(2i-1)},x_{\pi(2i)}\bigr) \nonumber
\\ \qquad \qquad{} \hphantom{\times \prod_{i=1}^N}{}
\times H_{\pi'(2i-1),\pi'(2i)}\bigl(\tilde{\mathbf{t}}+2\bigl[z^{-1}\bigr];
x'_{\pi'(2i-1)},x'_{\pi'(2i)}\bigr) \nonumber
\\
\qquad{} =
\int_{\mathbb{R}^{2N}}
 \Biggl[ \mathbf{1}_{\infty,\epsilon}^{2N}(\boldsymbol{x},\boldsymbol{x}') \prod_{i = 1}^{N} \rho(x_i)\rho(x_i')\dd x_i \dd x_i' \Biggr]
 \sum_{\pi,\pi'\in \mathfrak{S}_N}
 \frac{ (-1)^{\mathrm{sign}(\pi)+\mathrm{sign}(\pi')}}{
 2^{N}(N!)^2} \nonumber
\\
\qquad\quad{} \times \prod_{i=1}^N H_{\pi(2i-1),\pi(2i)}\bigl(\mathbf{t};x_{\pi(2i-1)},x_{\pi(2i)}\bigr)
 H_{\pi'(2i-1),\pi'(2i)}\bigl(\tilde{\mathbf{t}};x'_{\pi'(2i-1)},x'_{\pi'(2i)}\bigr),\label{expanH}
\end{gather}
where $H_{i,j}(\mathbf{t};x_i,x_j)$ was given in (\ref{Hijt}).

We would like to show that both sides have a limit as $\epsilon \rightarrow 0$. To proceed, we decompose in even and odd parts
\[
{\rm e}^{\sum_{m \geq 0} t_{2m + 1}x^{2m + 1}} h_i(x) = h_i^+(x) + xh_i^-(x) \qquad \text{with}\ h_i^{\pm} \ \text{even}.
\]
Then, we restrict the integration over $x_1,\dots,x_N ,x_1',\dots,x_N'\geq 0$ by taking the even part of the integrand in each variable and multiplying by $4$, namely replacing $H_{i,j}(\mathbf{t};x,y)$ with
\[
H_{i,j}^+(\mathbf{t};x,y) = \frac{\bigl(x^2+y^2\bigr)h_i^+(x)h_j^+(y)}{x^2 - y^2} - \frac{2x^2y^2h_i^-(x)h_j^-(y)}{x^2 - y^2}.
\]
Since the domain of integration $D_{\infty,\epsilon}$ is symmetric under $x_i \leftrightarrow x_j$ and $x_i' \leftrightarrow x_j'$, we can also replace each factor of $H_{i,j}(\mathbf{t};x,y)$ in the integral with
\begin{align}
\tilde{H}_{i,j}(\mathbf{t};x,y) ={}& \frac{1}{2}\bigl(H_{i,j}^+(\mathbf{t};x,y) + H_{i,j}^+(\mathbf{t};y,x)\bigr) \label{fromHtoHtilde}\\
={}& \frac{x^2 + y^2}{2(x+y)} \frac{h_i^+(x)h_j^+(y) - h_i^+(y)h_j^+(x)}{x - y} - \frac{x^2y^2}{x+y} \frac{h_i^-(x)h_j^-(y) - h_i^-(y)h_j^-(y)}{x - y}.
\nonumber
\end{align}
If $f$, $g$ are continuously differentiable functions, we can rewrite
\begin{align}
\frac{f(x)g(y) - f(y)g(x)}{x - y} & = \frac{f(x) - f(y)}{x - y} g(y) - f(y) \frac{g(x) - g(y)}{x - y} \nonumber\\
& = \int_{0}^{1}\dd t (g(y)f'((1-t)x + ty) - f(y)g'((1-t)x + ty)).\label{2ratio}
\end{align}
The two products of $H$'s in \eqref{expanH} are therefore replaced by
two products of $\tilde{H}$'s, and we expand them with
\eqref{fromHtoHtilde}--\eqref{2ratio} to get $8^N$ terms integrated
over $\mathbb{R}_{> 0}^{2N} \cap D_{\infty,\epsilon}^{2N}$. For
integration of each of the term, we can use dominated convergence to
let $\epsilon \rightarrow 0$ in both sides of \eqref{expanH} because
\smash{$0<\frac{x^2 + y^2}{x + y}\leq x+y$} and \smash{$0\leq \frac{x^2y^2}{x + y}\leq \frac{1}{16}(x+y)^3$} on
$\mathbb{R}_{\geq 0}$, $h_i^{\pm}$ as well as its first order derivative are
bounded by a polynomial, and $\rho$ has finite moments of all orders.

Now let us compare to the Pfaffian of the integrated kernel
\begin{align*}
\operatorname{Pf} \biggl( \fint_{\mathbb{R}^2} H_{i,j}(\mathbf{t};x,y) \rho(x)\rho(y) \dd x \dd y\biggr) & = \operatorname{Pf} \biggl(\lim_{\epsilon \rightarrow 0} \int_{\mathbb{R}^2}H_{i,j}(\mathbf{t};x,y) \mathbf{1}_{\infty,\epsilon}^{2}(x,y) \rho(x) \rho(y)\, \dd x \dd y \biggr) \\
& = \operatorname{Pf} \biggl(\lim_{\epsilon \rightarrow 0} \int_{\mathbb{R}_{> 0}^2} \tilde{H}_{i,j}(\mathbf{t};x,y) \mathbf{1}_{\infty,\epsilon} ^2(x,y)\rho(x)\rho(y)\, \dd x \dd y\biggr).
\end{align*}
Expanding the Pfaffian, the only difference with the right-hand side of \eqref{expanH} as we handled it is that the $(\pi,\pi')$-term is now integrated against of the indicator
\[
\prod_{i = 1}^{N} \mathbf{1}^2_{\infty,\epsilon}\bigl(x_{\pi(2i - 1)},x_{\pi(2i)}\bigr) \mathbf{1}^2_{\infty,\epsilon}\bigl(x_{\pi'(2i-1)}',x_{\pi'(2i)}'\bigr)
\]
instead of $\mathbf{1}^{2N}(\boldsymbol{x},\boldsymbol{y})$. As the difference between the two indicators converge pointwise to $0$ as $\epsilon \rightarrow 0$, the domination argument used previously to handle each term shows that the limit $\epsilon \rightarrow 0$ exists and is equal to the $\epsilon \rightarrow 0$ limit of the right-hand side of \eqref{expanH}. The same argument applies for the left-hand side, and this yields the claimed identity.
\end{proof}

Finally, Corollary~\ref{Comain} follows from Lemma~\ref{lemfinal}
by setting
$h_j(x) = \sqrt{2} {\rm e}^{-(\lambda_j-\lambda_{\min})\frac{x^2}{2}}$ and
$\rho(x) = {\rm e}^{-\frac{1}{2}\lambda_{\min} x^2 + V_0(x)}$
and observing then
\[
Z_N(\mathbf{t}) = \frac{\sqrt{\Delta(\boldsymbol{\lambda},\boldsymbol{\lambda})}}{2^{\frac{N^2}{2}} (2\pi)^{\frac{N}{2}} \prod_{n = 1}^{N - 1} n!} \mathcal{T}_N(\mathbf{t})
\]
 by comparison with Theorem~\ref{th:1}.

\begin{Remark} \label{RemarkBKP}

All the manipulations until (and including) Lemma~\ref{lemmaRes} are standard. If we were ignoring the regularisation issues,
the BKP tau-function $\mathcal{T}_N(\mathbf{t})$ of \eqref{lemfinal} would
be associated with the group element
\begin{equation}
\label{gGamma}
\hat{g} = \int_{\Gamma^N} \prod_{i = 1}^{N} \phi(x_i) h_i(x_i)\rho(x_i)\,\dd x_i,
\qquad \Gamma = \mathbb{R}
\end{equation}
in the notation of \cite[Section 7.3]{harnad_balogh_2021}. This element
does not naively satisfy the quadratic algebraic condition
\cite[equation 7.3.63]{harnad_balogh_2021}:
there are correction terms corresponding to the last two lines of~\eqref{pre-Hirota2}. There is no contradiction
with \cite[Theorem 7.3.10]{harnad_balogh_2021} (which can already be
found in \cite{Date:1981qz}) as the latter characterises the
polynomial solutions of the BKP hierarchy, while our partition
function is not polynomial. Non-polynomial BKP tau-functions
associated with \eqref{gGamma} were discussed in
\cite{harnad_balogh_2021}, but only for integration contours such that
$\Gamma \cap (-\Gamma) = \varnothing$, and this does not apply to us. If
one wanted to make sense of the quadratic algebraic condition on $\hat{g}$
to formulate a generalisation of \cite[Theorem
7.3.10]{harnad_balogh_2021}, one would need to describe carefully
which completion of the Clifford algebra and which notion of tensor
product should be used. For instance, in our case, we see from the
proof that the quadratic algebraic condition only holds in a
distributional sense (after integration).
\end{Remark}

\section{Comments}\label{section4}

\subsection{Kontsevich cubic model} The original matrix model of Kontsevich \cite{Kontsevich:1992ti} is obtained from \eqref{Zdef} by specialisation to potential $V_0(x) = \frac{{\rm i}x^3}{6}$ and $\mathbf{t} = 0$, and Kontsevich showed that $Z_N(\mathbf{0})$ is a KdV tau-function with respect to the times $\mathbf{s} = (s_1,s_3,s_5,\dots)$ with $s_{2k + 1} = - \frac{1}{2k + 1} \operatorname{Tr}\Lambda^{-(2k +1)}$. By a result of Alexandrov
\cite{Alexandrov:2020nzt}, it is also a BKP tau-function in the times $(2s_1,2s_3,2s_5,\dots)$.

\subsection[The Q-Schur approach and 2-BKP conjecture]{The $\boldsymbol{Q}$-Schur approach and $\boldsymbol{2}$-BKP conjecture}

The BKP hierarchy of Corollary~\ref{Comain} is independent of the KdV/BKP
structure of the original Kontsevich model
since it rather governs the evolution under polynomial deformations of
the potential (parameters $\mathbf{t}$), for fixed $\Lambda$ (parameters $\mathbf{s}$). We are
mainly interested in expansions around $\mathbf{t} =\boldsymbol{0}$, where the BKP hierarchy amounts to quadratic relations
 between moments. When~$V_0$ is even
these relations simplify considerably as we have indicated at the end of Section~\ref{Chap1}. We expect that for arbitrary $V_0$, $Z_N(\mathbf{t})$ is a 2-BKP tau-function in the times $\mathbf{t}$ and $\mathbf{s}$ defined by $s_k = \frac{2}{2k + 1} \mathrm{Tr} \Lambda^{-(2k + 1)}$. Following the suggestion of an anonymous referee, we prove this for ${V_0 = 0}$, using properties of Schur $Q$-functions -- for the specific case of the original Kontsevich matrix model, see also \cite{LY}. In particular, this retrieves in this special case the result of Corollary~\ref{Comain}.

\begin{Theorem}
\label{2BKP} For $V_0 = 0$, $Z_N(\mathbf{t})$ as a function of $\mathbf{t}$ and $\mathbf{s} = \frac{1}{2k + 1}{\rm Tr}\Lambda^{-(2k + 1)}$ is a $2$-BKP tau-function.
\end{Theorem}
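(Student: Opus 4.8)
The plan is to establish the \emph{diagonal} double Schur $Q$-function expansion of $Z_N$. Recall from the theory of BKP (Nimmo, Harnad--Balogh \cite{Nimmo_1990, harnad_balogh_2021}) that a formal series in two families of odd times $\mathbf{t}=(t_1,t_3,\dots)$ and $\mathbf{s}=(s_1,s_3,\dots)$ is a $2$-BKP tau-function as soon as it is a vacuum expectation $\langle 0|\gamma(\mathbf{t})\tilde\gamma(\mathbf{s})\hat g|0\rangle$ for two commuting families of BKP flows acting on a single Clifford group element $\hat g$; equivalently, it admits an expansion $\sum_{\lambda,\mu\in\mathcal{DP}}\pi_{\lambda\mu}\,Q_\lambda(\mathbf{t}/2)Q_\mu(\mathbf{s}/2)$ over strict partitions whose coefficient array is the natural $2$-BKP analogue of a Pfaffian (Cartan--Pl\"ucker) point. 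My target is the diagonal case $\pi_{\lambda\mu}=\delta_{\lambda\mu}c_\lambda$, which is produced by a \emph{diagonal} $\hat g$ and which therefore forces $c_\lambda$ to factorise over the parts of $\lambda$ as $c_\lambda=\prod_i r(\lambda_i)$. So the whole proof reduces to exhibiting such an expansion and checking the multiplicativity of its coefficients.

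First I would open the eigenvalue integral \eqref{Ztint} with $V_0=0$ and expand the potential factor $\prod_{i=1}^N \mathrm{e}^{V_{\mathbf{t}}(x_i)}=\exp\bigl(\sum_{k\geq0}t_{2k+1}p_{2k+1}(\boldsymbol{x})\bigr)$ in Schur $Q$-functions via the Cauchy identity $\exp\bigl(2\sum_{k\ \mathrm{odd}}\tfrac1k p_k(\boldsymbol{x})p_k(\boldsymbol{y})\bigr)=\sum_{\lambda\in\mathcal{DP}}2^{-\ell(\lambda)}Q_\lambda(\boldsymbol{x})Q_\lambda(\boldsymbol{y})$, read with $\mathbf{t}/2$ playing the role of the second alphabet. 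This turns $Z_N(\mathbf{t})$ into $\sum_{\lambda\in\mathcal{DP}}2^{-\ell(\lambda)}Q_\lambda(\mathbf{t}/2)\,\langle Q_\lambda\rangle_\Lambda$, where $\langle\,\cdot\,\rangle_\Lambda$ denotes the average against the positive eigenvalue density $\frac{(\Delta(\boldsymbol{x}))^2}{\Delta(\boldsymbol{x}^2)}\det_{i,j}\bigl(\mathrm{e}^{-\frac12\lambda_i x_j^2}\bigr)\prod_i \mathrm{d}x_i$ produced by \eqref{Ztint}. Only the dependence on $\Lambda$ now remains to be organised.

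The core of the argument is the evaluation of $\langle Q_\lambda\rangle_\Lambda$. For strict $\lambda$ of even length I would use the Pfaffian presentation $Q_\lambda=\operatorname{Pf}_{i,j}\bigl(Q_{(\lambda_i,\lambda_j)}\bigr)$ in terms of the two-part $Q$-functions, together with Schur's Pfaffian identity for $\frac{(\Delta(\boldsymbol{x}))^2}{\Delta(\boldsymbol{x}^2)}$ and the de~Bruijn reduction of Lemma~\ref{HUHH}, so that the $N$-fold integral collapses to a Pfaffian whose entries are two-variable Gaussian averages of the building blocks $q_a$. Evaluating those pair-integrals in closed form is the decisive computation: one must show that, after absorbing the $\Lambda$-Vandermonde exactly as in the proof of Theorem~\ref{th:1}, they reassemble into the \emph{same} Pfaffian read in the variables $\mathbf{s}$, i.e.
\[
\langle Q_\lambda\rangle_\Lambda = c_\lambda\, Q_\lambda(\mathbf{s}/2),\qquad \mathbf{s}=\Bigl(\tfrac{1}{2k+1}\operatorname{Tr}\Lambda^{-(2k+1)}\Bigr)_{k\geq0}.
\]
This is the $Q$-function analogue of the superintegrability of Gaussian matrix averages, and the same computation should reveal that the scalar $c_\lambda$ is a product $\prod_i r(\lambda_i)$ over the parts of $\lambda$ (built from odd double factorials), which is precisely the multiplicativity flagged above.

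Combining the two expansions yields the diagonal double expansion $Z_N(\mathbf{t})=Z_N(\boldsymbol 0)\sum_{\lambda\in\mathcal{DP}}\hat c_\lambda\,Q_\lambda(\mathbf{t}/2)Q_\lambda(\mathbf{s}/2)$, with $\hat c_\lambda=2^{-\ell(\lambda)}c_\lambda/c_\emptyset$ multiplicative over parts. Because such a diagonal array of coefficients is realised by a diagonal Clifford group element driven by two commuting families of BKP flows, the series is a bona fide $2$-BKP tau-function, and specialising $\mathbf{s}$ to its $\Lambda$-value while keeping only the $\mathbf{t}$-flows recovers the single-time statement of Corollary~\ref{Comain}. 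The main obstacle is the middle step: proving that $\langle Q_\lambda\rangle_\Lambda$ is \emph{exactly} proportional to $Q_\lambda(\mathbf{s}/2)$, with no off-diagonal $Q_\mu$ appearing, and that the proportionality constant factorises over the parts of $\lambda$; everything else is bookkeeping with the Cauchy identity and the Pfaffian calculus already developed in Section~\ref{S2}.
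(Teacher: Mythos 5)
Your overall skeleton agrees with the paper's proof (Cauchy identity for Schur $Q$-functions in the $\mathbf{t}$-variables, then closed-form Gaussian averages of $Q$-functions, then recognition of a $2$-BKP expansion), but the identity you place at the core of the argument is false, and it is false in a way that invalidates your concluding mechanism. You claim $\langle Q_\lambda\rangle_\Lambda = c_\lambda\,Q_\lambda(\mathbf{s}/2)$ with no off-diagonal terms. The correct superintegrability statement, which the paper imports from \cite[equations~(55) and~(56)]{Mironov:2020tjf}, is that the Gaussian average of $Q_\lambda$ \emph{vanishes whenever $\lambda$ has an odd part}, and for even partitions
\[
\int_{\mathcal{H}_N} \dd \mathbb{P}_N(H)\,
 Q_{2\lambda}\bigl(\bigl\{\tfrac{1}{2k+1}\mathrm{Tr}\bigl(H^{2k+1}\bigr)\bigr\}\bigr)
 = Q_\lambda(\mathbf{s}) \prod_{j=1}^{\ell(\lambda)} (2\lambda_j-1)!!,
\]
i.e., the average of the $Q$-function of the \emph{doubled} partition $2\lambda$ is proportional to the $Q$-function of $\lambda$ itself. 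The partition label gets halved, and no rescaling of the time variables (your $\mathbf{s}/2$) can absorb a change of partition index. Consequently the double expansion of $Z_N(\mathbf{t})$, equation \eqref{ZNt} of the paper, is \emph{not} diagonal: it couples $Q_{2\lambda}(\tfrac12\mathbf{t})$ to $Q_\lambda(\mathbf{s})$, so the coefficient array corresponds to $C_{j,k}=\delta_{j,2k}(2k-1)!!$ rather than to a diagonal matrix. Your proposed "decisive computation" — reassembling the pair integrals into "the same Pfaffian read in the variables $\mathbf{s}$" — is exactly the step at which the halving of the partition appears, and as stated it would fail.

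This matters because your route to the $2$-BKP conclusion runs through a diagonal Clifford group element, which realises only the hypergeometric-type family $\sum_\lambda \prod_i r(\lambda_i)\, Q_\lambda Q_\lambda$; that family does not contain \eqref{ZNt}. The repair is available but changes the endgame: the criterion the paper actually uses, \cite[Lemma~5.5]{MMNO}, admits an \emph{arbitrary} matrix $C_{j,k}$ as $2$-BKP initial data, and one checks that $C_{j,k}=\delta_{j,2k}(2k-1)!!$ fits it, in the times $\mathbf{t}$ and $2\mathbf{s}$ (the factor of $2$ in the second family of times is a normalisation bookkeeping you would also need to track against \eqref{biline}). Two smaller points: your assertion that a diagonal array "forces" the coefficients to factorise over parts is backwards — multiplicativity is an input ensuring the Cartan--Pl\"ucker relations, not a consequence of diagonality; and while your plan to re-derive the $Q$-average via $Q_\lambda=\operatorname{Pf}\bigl(Q_{(\lambda_i,\lambda_j)}\bigr)$, Schur's identity and Lemma~\ref{HUHH} is a reasonable self-contained alternative to citing \cite{Mironov:2020tjf}, it is a substantial computation that your proposal does not carry out, and its correct output is the $2\lambda\leftrightarrow\lambda$ pairing above, not your diagonal one.
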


\begin{proof}
The starting point is the Cauchy formula for Schur $Q$-functions
\[
 \exp\biggl(2\sum_{k \geq 0} (2k+1) t_{2k+1}p_{2k+1}\biggr)
 =\sum_{\lambda \in \text{SP}} \frac{1}{2^{\ell(\lambda)}}
 Q_\lambda(\mathbf{t})Q_\lambda(\mathbf{p}),
\]
where the sum is over the set $\text{SP}$ of strict integer partitions
$\lambda_1>\lambda_2>\dots > \lambda_{\ell(\lambda)}>0$,
including $Q_\varnothing(\mathbf{t})=1$. Note that there are different conventions for the definition of $Q$-Schur functions in the literature.
The key step are two formulae, that can be found, e.g., in \cite[equations~(55) and~(56)]{Mironov:2020tjf}. Adapted in our notations, the first formula allows computing Gaussian averages of $Q$-Schur functions:
\begin{equation*}
 \int_{\mathcal{H}_N} \dd \mathbb{P}_N(H)
 Q_{2\lambda}\bigl(\bigl\{\tfrac{1}{2k+1}\mathrm{Tr}\bigl(H^{2k+1}\bigr)\bigr\}_{k \geq 1}\bigr)=
 \frac{Q_\lambda(\mathbf{s})
 Q_\lambda(1,0,0,\dots)}{Q_{2\lambda}(1,0,0,\dots)} ,
\end{equation*}
where $2\lambda=(2\lambda_1,2\lambda_2,\dots,2\lambda_{\ell(\lambda)})$
and $\mathbf{s}=(s_1,s_3,s_5,\dots)$
with $s_{2k+1}=\frac{1}{2k+1}\operatorname{Tr}\Lambda^{-(2k+1)}$.
The corresponding integral over $Q_\lambda$ vanishes if $\lambda$
has any odd part. The second formula comes from the analog of the hook-length formula for the specialisation to $(1,0,0,\dots)$ of the $Q$-Schur functions
\[
\frac{Q_\lambda(1,0,0,\dots)}{Q_{2\lambda}(1,0,0,\dots)}
=\prod_{j=1}^{\ell(\lambda)} (2\lambda_j-1)!!.
\]

Combining both formulae yields for $V_0 = 0$
\begin{align}
Z_N(\mathbf{t}) & = \int_{\mathcal{H}_N} \dd \mathbb{P}_N(H)
 \exp\biggl(\sum_{k \geq 0} t_{2k+1} \mathrm{Tr}\bigl(H^{2k+1}\bigr)\biggr) \nonumber\\
& =\sum_{\lambda \in \text{SP}} \frac{1}{2^{\ell(\lambda)}}
Q_{2\lambda}\bigl(\tfrac{1}{2}\mathbf{t}\bigr)
\frac{Q_\lambda(\mathbf{s})
 Q_\lambda(1,0,0,\dots)}{Q_{2\lambda}(1,0,0,\dots)} .\label{ZNt}
 \end{align}

By comparison with \cite[Lemma 5.5]{MMNO}, we recognise a 2-BKP tau-function with
\[
C_{j,k} = \delta_{j,2k} (2k - 1)!!
\]
in the times $\mathbf{t}$ and $2\mathbf{s}$. The factor of $2$ in the times is necessary to compare with the BKP hierarchy with the factors specified in \eqref{biline} for the BKP equations, while it was absent in \cite{MMNO} due to the normalisation chosen in their equation~(3.22).
\end{proof}

\subsection{Cartan--Pl\"ucker relations} It is known (see, e.g., \cite[Theorem~7.1.1]{harnad_balogh_2021}) that a formal power series in $\mathbf{t}$ is a
BKP tau-function if and only if it can be expanded as
\[
\sum_{\lambda \in SP} \kappa_\lambda
Q_\lambda\bigl(\tfrac{1}{2}\mathbf{t}\bigr),
\]
where the $\kappa_\lambda$ satisfy Cartan--Pl\"ucker relations for isotropic Grassmannians. Stopping at \eqref{ZNt}, we see that Corollary~\ref{Comain} is equivalent to the property that, \emph{for any potential
$V_0(x)$ for which ${\rm e}^{- \frac{1}{2}\lambda_{\min} x^2 + V_0(x)}$ has
finite moments}, the family
\begin{equation*}
\kappa_\lambda= \int_{\mathcal{H}_N} \dd \mathbb{P}_N(H)
 Q_{\lambda}\bigl(\bigl\{\tfrac{1}{2k+1}\mathrm{Tr}\bigl(H^{2k+1}\bigr)\bigr\}\bigr) {\rm e}^{\mathrm{Tr}(V_0(H))}
\end{equation*}
satisfies the Cartan--Pl\"ucker relations for isotropic Grassmannians. For general $V_0$, this result
does not seem to be covered by previous work. For $V_0 = 0$, it is covered alternatively by Theorem~\ref{2BKP} and corresponds to
\[
\kappa_\lambda= Q_{\frac{\lambda}{2}}(\mathbf{s}) \prod_{i = 1}^{\ell(\lambda)} (2\lambda_i - 1)!!.
\]

\subsection{Topological recursion} Apart from $V_0 = 0$ and $V_0$ cubic, the simplest even case is
$V_0(x)= -\frac{cN}{4}x^4$ for some parameter $c > 0$, and its formal
large $N$ topological expansion has been studied during the last years~\mbox{\cite{Grosse:2019jnv, Schurmann:2019mzu}},
providing strong evidence
\cite{Branahl:2020yru} that the topological expansion of the cumulants
obey the blobbed topological recursion \cite{Borot:2015hna}, which is
the general solution of abstract loop equations \cite{Borot:2013lpa}.
In~\cite{Hock:2023nki}, a~recursive formula for meromorphic
differentials which are generating series of the genus~$1$ cumulants
was given, and a generalisation to higher genera was outlined.

On the other hand, BKP tau-functions of hypergeometric type with mild analytic assumptions are known to satisfy abstract loop equations, and thus (perhaps blobbed) topological recursion~\cite{SA}. In particular, this was applied to prove the conjecture of \cite{GKL} that spin Hurwitz numbers (weighted by the parity of a spin structure) satisfy topological recursion. Although~$Z_N(\mathbf{t})$ is not a hypergeometric tau-function of BKP in the sense of \cite{Orlov}, one may ask if similar techniques could not prove that the topological expansion of the correlators of $Z_N(\mathbf{t})$ are governed by (blobbed or not) topological recursion. We hope to return to this question at a later occasion.\looseness=1

\subsection*{Acknowledgements}

R.W.\ was supported by the Cluster of Excellence
\emph{Mathematics M\"unster} and the CRC 1442 \emph{Geometry:
 Deformations and Rigidity} (Funded by the Deutsche
 Forschungsgemeinschaft
 Project-ID 427320536 -- SFB 1442, as well as under Germany's
 Excellence Strategy EXC 2044 390685587, Mathematics M\"unster:
 Dynamics -- Geometry -- Structure). Parts of this note were prepared during
a workshop at the Erwin Schr\"odinger Institute in Vienna, and during a stay at IH\'ES of G.B: we thank these institutions for their hospitality. We thank anonymous referees for valuable comments which led to expand Sections~\ref{SproofBKP} and~\ref{section4}.

\newpage

\pdfbookmark[1]{References}{ref}
\LastPageEnding

\end{document}